\newtheorem{definition}{Definition}
\newtheorem{theorem}{Theorem}
\newtheorem{lemma}{Lemma}
\newtheorem{Claim}{Claim}
\newtheorem{Conjecture}{Conjecture}
\title{On the Sensitivity Complexity of $k$-Uniform Hypergraph Properties}
\author[1]{Qian Li}
\author[1]{Xiaoming Sun}
\affil[1]{Institute of Computing Technology, Chinese Academy of Sciences, Beijing, China

  \texttt{\{liqian, sunxiaoming\}@ict.ac.cn}}
\date{\vspace{0em}}
\begin{document}
\maketitle
\begin{abstract}
In this paper we investigate the sensitivity complexity of hypergraph properties. We present a $k$-uniform hypergraph property with sensitivity complexity $O(n^{\lceil k/3\rceil})$  for any $k\geq3$, where $n$ is the number of vertices. Moreover, we can do better when $k\equiv1$ (mod 3) by presenting a $k$-uniform hypergraph property with sensitivity $O(n^{\lceil k/3\rceil-1/2})$. This result disproves a conjecture of Babai~\cite{Babai}, which conjectures that the sensitivity complexity of $k$-uniform hypergraph properties is at least $\Omega(n^{k/2})$. We also investigate the sensitivity complexity of other symmetric functions and show that for many classes of transitive Boolean functions the minimum achievable sensitivity complexity can be $O(N^{1/3})$, where $N$ is the number of variables. Finally, we give a lower bound for sensitivity of $k$-uniform hypergraph properties, which implies the {\em sensitivity conjecture} of $k$-uniform hypergraph properties for any constant $k$.
\end{abstract}
\noindent

\section{Introduction}
In order to understand the effect of symmetry on computational complexity, especially in the decision tree model, Boolean functions with certain symmetry have been extensively investigated. It is observed that symmetry usually implies high complexity
or makes the problem harder in the decision tree model. An illustrative example is the well known {\em evasiveness conjecture}, which asserts that any monotone transitive Boolean function is evasive, and it has attracted a lot of attention~\cite{Evasiveness,CKS01,Kulkarni13,Stacs10}. Rivest and Vuillemin~\cite{RivestV76} showed that any non-constant monotone graph property are weekly evasive. Kulkarni et al.~\cite{KQS15} showed the analogous result for 3-hypergraph properties. Black~\cite{Black2015} extended these results to $k$-uniform hypergraph properties for any fixed $k$.

Sensitivity complexity is an important complexity measure of a Boolean function in the decision tree model, and sensitivity complexity of Boolean functions with certain symmetry has also attracted a lot of attention. One of the most challenging problem here is whether {\em symmetry} implies high sensitivity complexity. The famous sensitivity conjecture, which asserts sensitivity complexity and block sensitivity are polynomially related, implies $s(f)=\Omega(n^{\alpha})$ for transitive functions with some constant $\alpha>0$ since it has been shown that $bs(f)=\Omega(n^{1/3})$ for transitive functions~\cite{Sun07}. Turan~\cite{turan} initiated the study of sensitivity of graph properties and proved that the sensitivity is greater than $n/4$ for any nontrivial graph property, where $n$ is the number of vertices, and this relation is also tight up to a constant factor. He also pointed out that for symmetric functions, $s(f)\geq n/2\geq bs(f)/2$.  Recently Sun improved the lower bound to $\frac{6}{17}n$~\cite{Sun}, and Gao et al.~\cite{GMSZ13} investigated the sensitivity of bipartite graph properties as well.
In 2005, Chakraborty~\cite{Chakraborty} constructed a minterm cyclically invariant Boolean function whose sensitivity is $\Theta(n^{1/3})$, which answers Turan's question~\cite{turan} in the negative.
He also showed this bound is tight for minterm transitive functions.

For hypergraph properties, Biderman et al.~\cite{Babai} present a sequence of $k$-uniform hypergraph properties with sensitivity $\Theta(\sqrt{N})$, where $N={n\choose k}$ is the number of variables. Babai conjectures that this bound is tight, i.e., $s(f)=\Omega(\sqrt{N})$ for any nontrivial $k$-uniform hypergraph property $f$.\\

\noindent\textbf{Our Results.} In this paper we disprove this conjecture by constructing  $k$-uniform hypergraph properties with sensitivity $O(n^{\lceil k/3 \rceil})$, i.e.,
\begin{theorem}\label{upperbound:general}
For any $k\geq 3$, there exist a sequence of $k$-uniform hypergraph properties $f$ such that $s(f)=O(n^{\lceil k/3 \rceil})$.
\end{theorem}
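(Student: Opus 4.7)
The plan is to build an $S_n$-invariant (hence hypergraph) property $f$ whose sensitivity at a particular input $H^{\ast}$ is $O(n^{\lceil k/3\rceil})$. The motivating intuition is Chakraborty's construction of transitive Boolean functions with sensitivity $\Theta(N^{1/3})$ on $N$ variables: one isolates a sparse set $S$ of ``active'' coordinates and encodes into the remaining coordinates enough rigid structure to uniquely identify $S$ under the group action, so that sensitivity is governed only by $|S|$. Here $N = \binom{n}{k} = \Theta(n^k)$, and the target $O(n^{\lceil k/3\rceil})$ matches $N^{1/3}$ up to rounding caused by $k \bmod 3$.

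Concretely I would proceed as follows. First, partition $[n]$ into three roughly equal blocks $V_1, V_2, V_3$ and classify each $k$-edge by its intersection profile $(a_1,a_2,a_3)$ with $a_1+a_2+a_3=k$. Second, choose a template $H^{\ast}$ that \emph{saturates} certain profiles (it either contains every edge of the profile or none), with the saturation pattern rigidly encoding the tripartition, while restricting to a specific sparse subset $S\subseteq\binom{[n]}{k}$ of size $O(n^{\lceil k/3\rceil})$ sitting on one distinguished profile. Third, define $f(H)=1$ iff there exists a tripartition of $[n]$ such that $H$ matches $H^{\ast}$ on every saturated profile and additionally $H|_S$ satisfies some chosen Boolean condition $g$; the existential quantifier over tripartitions yields $S_n$-invariance automatically. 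Fourth, at $H^{\ast}$ itself, flipping a non-$S$ edge should be non-sensitive via a ``block-sliding'' argument that moves at most one vertex between blocks to restore saturation, while flipping an $S$-edge is sensitive only when $g$ is sensitive there, contributing at most $|S|=O(n^{\lceil k/3\rceil})$ in total.

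The principal obstacle is the block-sliding argument: one must choose the saturated profiles so that every single-edge flip on a saturated profile can be absorbed by shifting one vertex between blocks without disturbing enough of $S$ to ruin the witness for $g$. This forces a careful balance between enough saturation to make the tripartition essentially unique (else $f$ becomes trivial) and enough flexibility to tolerate a one-edge perturbation. A secondary obstacle is promoting the sensitivity bound from $H^{\ast}$ to every input: $S_n$-invariance and the rigidity of the tripartition encoding should imply that the sensitivity-maximizing input lies in the $S_n$-orbit of $H^{\ast}$ (or within a constant-size perturbation of it), reducing to the case already analyzed. The improved $O(n^{\lceil k/3\rceil-1/2})$ bound for $k\equiv 1\pmod 3$ mentioned in the abstract presumably arises from an asymmetric three-block partition tuned to this residue class that saves a $\sqrt{n}$ factor in the size of the distinguished profile.
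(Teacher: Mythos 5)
Your proposal is a genuinely different route from the paper's, and it has gaps that I believe are fatal as stated.

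The paper's construction is a minterm-transitive function: fix a partial assignment $p$ on $\binom{[n]}{k}$ and define $f(x)=1$ iff $x$ is consistent with $\sigma(p)$ for some $\sigma\in S_n$. The geometry is not a tripartition of $[n]$ into three $\Theta(n)$-blocks; instead there are two \emph{constant-size} anchor sets, $B$ of size $k_1$ and $C=W_1\cup\cdots\cup W_6$ of size $6k_2$ (where $k_1+2k_2=k$ and $k_1,k_2\le\lceil k/3\rceil$), and the remaining $n-O(1)$ vertices form a fully symmetric pool $D$. Every non-$\star$ entry of $p$ is an edge of the form $B\cup S$ with $|S\cap C|\ge k_2$, so $|p|=O(n^{k_2})$, giving $s_1(f)\le C_1(f)\le|p|=O(n^{\lceil k/3\rceil})$ immediately. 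The 0-sensitivity bound is the technical heart: each sensitive edge at a $0$-input $x$ supplies a $\sigma(p)$ disagreeing with $x$ on exactly that edge, so all such $\sigma(p)$'s are pairwise within distance $2$; a case analysis shows that for each of the $O(n^{k_1})$ choices of $\sigma(B)$ only $O(1)$ of them survive, giving $s_0(f)=O(n^{\lceil k/3\rceil})$.

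Your approach differs in three ways that create genuine problems. First, with three $\Theta(n)$-blocks, a generic edge profile contains $\Theta(n^k)$ edges, so saturating even one profile fixes $\Theta(n^k)$ bits. Your $1$-sensitivity at $H^{\ast}$ is not controlled by $|S|$ alone: flipping any saturated edge $e$ destroys the witness tripartition, and your block-sliding repair does not restore it, because moving a single vertex between blocks changes the profile of $\Theta(n^{k-1})$ edges whose values were fixed by $H^{\ast}$, so the new tripartition's saturation constraints are violated in $\Theta(n^{k-1})$ places. Thus $f(H^{\ast}\oplus e)=0$ for essentially every saturated edge $e$, which would make $s_1$ polynomial in $n^k$, not $n^{\lceil k/3\rceil}$. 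Second, there is no small anchor: the paper's bound on the number of near-coincident isomorphic copies of $p$ crucially multiplies $O(n^{k_1})$ (choices of $\sigma(B)$, with $|B|=k_1$ constant) by $O(1)$ (copies per choice), and your tripartition has no analogous $O(1)$-size structure to enumerate over. Third, you defer the $0$-sensitivity analysis to a vague ``the maximizer is in the orbit of $H^{\ast}$'' claim, but this is precisely where the paper's work lies; the distance-$\le 2$ rigidity argument is not automatic and needs the specific $0/1/\star$ pattern on $B$ and the $W_i$'s. Your intuition that Chakraborty-style rigidity should be importable is correct, but the implementation needs small anchors and a minterm structure, not equal blocks with profile saturation.
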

Moreover, we can give  better constructions when $k\equiv1$ (mod 3).
\begin{theorem}\label{upperbound:3k+1}
For any $k\geq 4$ satisfying $k\equiv1$ (\emph{mod 3}) , there exist a sequence of $k$-uniform hypergraph properties $f$ such that $s(f)=O(n^{\lceil k/3 \rceil-1/2})$.
\end{theorem}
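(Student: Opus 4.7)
Writing $k = 3m+1$, the target is a $k$-uniform hypergraph property $f$ with $s(f)=O(n^{m+1/2})$. The plan is to refine the construction behind Theorem~\ref{upperbound:general}: for $k=3m$ that construction is expected to combine $m$ disjoint-triple Chakraborty-style gadgets, each of sensitivity $O(n^{1/3})$, into a hypergraph property of sensitivity $O(n^{m})$; applying it verbatim to $k=3m+1$ forces an extra vertex coordinate ranging over $n$ choices and contributes a naive factor of $n$ instead of $\sqrt{n}$, yielding only the $O(n^{m+1})$ bound of Theorem~\ref{upperbound:general}.

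The core idea is to replace that single-vertex coordinate by a read-once AND–OR gadget of the Kushilevitz/Ambainis/Chakraborty type. Concretely, I would group the $n$ candidate ``extra vertices'' into $\sqrt{n}$ blocks of size $\sqrt{n}$, take the OR within each block of the indicator that the corresponding $k$-edge lies in the input hypergraph, and then the AND across blocks (or a symmetric variant). Such a depth-$2$ AND–OR tree on $n$ inputs has sensitivity exactly $\sqrt{n}$ at every input, which is the source of the claimed saving. Composing this gadget with the Theorem~\ref{upperbound:general} construction on the remaining $3m$ vertices gives, for a fixed labeled vertex partition, a function of sensitivity $O(n^{m}\cdot n^{1/2})=O(n^{m+1/2})$.

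To turn this into a genuine hypergraph property, one must symmetrize over the $S_n$-action: either by selecting a canonical representative partition (the lexicographically first one meeting some combinatorial extremal condition read off from $H$) or by taking an OR over all partitions and arguing that at any input only boundedly many partitions are ``active''. This is exactly the symmetrization step that must already appear in Theorem~\ref{upperbound:general}, so I would reuse it essentially unchanged.

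The main obstacle, as in Theorem~\ref{upperbound:general}, is controlling the sensitivity after symmetrization: one must verify that flipping a single hyperedge simultaneously perturbs at most $O(n^{m+1/2})$ (canonical partition, gadget coordinate) pairs. This requires aligning the AND–OR gadget and the canonicalization so that the coordinates that can be sensitive at a given input all live inside a single partition, and the $\sqrt{n}$-sensitive block of the AND–OR gadget does not overlap with the $n^{m}$-sensitive triple gadgets in an uncontrolled way. This bookkeeping, rather than the construction itself, is where I expect the technical work to concentrate.
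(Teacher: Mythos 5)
Your high-level intuition --- that the extra vertex coordinate introduced when $k=3l+1$ should contribute a factor $\sqrt{n}$ rather than $n$ --- does match the paper, which explicitly describes its construction as ``packing $\sqrt{n}$ minterms together''. But the concrete mechanism you propose (a depth-$2$ AND--OR tree over $\sqrt{n}$ blocks of $\sqrt{n}$ ``extra'' vertices, composed with the $3l$-vertex gadget and then symmetrized) is not the paper's, and as sketched it has a real gap. The AND--OR tree's block partition of the $n$ candidate vertices is not $S_n$-invariant, so the function you build before symmetrizing is not a hypergraph property. Neither of your two suggested repairs is made to work: a lexicographically-canonical partition choice can change completely under a single edge flip, giving no control of sensitivity; and ``OR over all partitions'' is essentially what the minterm-transitive construction does, in which case the entire difficulty is to bound the number of pairwise-close isomorphisms of the minterm --- the ``bookkeeping'' you explicitly defer is the substance of the proof, not a routine check.

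The paper never leaves the minterm-transitive framework and never introduces an AND--OR gadget. For $k=3l+1$ it reuses the Theorem~\ref{upperbound:general} minterm structure on $B\cup C\cup D$ (with $k_1=k_2=l$), attaches one extra vertex $v$ to each hyperedge, and distinguishes a set $D_1\subseteq D$ of exactly $\sqrt{n}$ vertices: for $v\in D_1$ the hyperedges through $B\cup C\cup\{v\}$ are constrained to a copy of the $3l$-uniform gadget, for $v\in D_2=D\setminus D_1$ they are forced to $1$, and a third family of constraints with $S\cap D_1\neq\emptyset$ separates the two regimes. This single partial assignment $p$ simultaneously has $|p|=O(n^{l+1/2})$, giving $s_1(f)=O(n^{l+1/2})$, and is designed so that for each fixed $\sigma(B)$ only $O(\sqrt n)$ isomorphisms $\sigma(p)$ are pairwise within distance $2$ --- proved by pigeonholing on which vertex is mapped into $D_1$ and then invoking the Claim from Theorem~\ref{upperbound:general} --- yielding $s_0(f)=O(n^{l}\cdot\sqrt n)=O(n^{l+1/2})$. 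Nothing in your plan produces such a $p$, and you have not indicated how your gadget composition would yield the $O(\sqrt{n})$ bound on close isomorphisms, which is where the $n^{1/2}$ savings actually comes from.
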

More generally, we also investigate the sensitivity of $k$-partite $k$-uniform hypergraph properties. Actually, the constructions of $k$-uniform hypergraph properties are inspired by the constructions of $k$-partite $k$-uniform hypergraph properties.
\begin{theorem}\label{thm:partite1}
For any $k\geq 3$, there exist a sequence of $k$-partite $k$-uniform hypergraph properties $f:\{0,1\}^{n^k}\rightarrow\{0,1\}$ such that $s(f)=O(n^{\lceil k/3 \rceil})$.
\end{theorem}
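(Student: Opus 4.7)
\textbf{Proof plan for Theorem~\ref{thm:partite1}.}

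The strategy is to group the $k$ parts into three super-parts and reduce to a tripartite $3$-uniform question. Concretely, I would write $k=\ell_1+\ell_2+\ell_3$ with $\max_i\ell_i=\lceil k/3\rceil$, and identify each $k$-partite hyperedge $(v_1,\ldots,v_k)\in V_1\times\cdots\times V_k$ with a tripartite super-edge $(\alpha,\beta,\gamma)\in[n]^{\ell_1}\times[n]^{\ell_2}\times[n]^{\ell_3}$. Under this identification, a $k$-partite $k$-uniform hypergraph becomes a tripartite $3$-uniform ``super-hypergraph'' on super-parts of sizes $N_i=n^{\ell_i}$. The point of this reduction is that the action of $S_n^k$ on hyperedges, viewed at the super-vertex level, embeds as a subgroup of $S_{N_1}\times S_{N_2}\times S_{N_3}$; therefore any $\prod_i S_{N_i}$-invariant property of the super-hypergraph pulls back to an $S_n^k$-invariant $k$-partite $k$-uniform property, and since the bijection on variables is the identity, the two functions have identical sensitivity.

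The technical heart is then to construct, for each triple $(N_1,N_2,N_3)$ with $\max_i N_i=n^{\lceil k/3\rceil}$, a $\prod_i S_{N_i}$-invariant tripartite $3$-uniform hypergraph property $P$ with $s(P)=O(\max_i N_i)$. I would aim for a ``contains a rigid witness'' style construction: $P(T)=1$ iff $T$ contains a specific combinatorial substructure $W$ of size $\Theta(\max_i N_i)$ (e.g.\ a $3$-dimensional matching of appropriate size, possibly augmented by an auxiliary compatibility condition forcing near-witnesses to overlap). The sensitivity analysis then splits: on the $P=1$ side, sensitive bits are precisely those lying in every witness of $T$, a count bounded via a matroid-rank argument by $|W|=O(\max_i N_i)$; on the $P=0$ side, sensitive $0$-bits are those whose flip completes a new witness, and their count must be bounded by showing near-witnesses share scarce structural slots. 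With such a $P$ in hand, Step~1 immediately gives the desired $k$-partite property, and non-triviality and invariance are then routine verifications.

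The principal obstacle is the $P=0$ side. The most natural $\prod_i S_{N_i}$-invariant ``exists''-type properties, such as ``some super-vertex is isolated'' or ``some super-line is all-ones'', already incur $\Omega((\max_i N_i)^2)$ sensitivity on adversarial inputs: for example, an input in which many super-lines are simultaneously one flip away from being all-ones yields roughly $N_1N_2$ distinct sensitive $0$-bits for the latter property. To bring the sensitivity down to $O(\max_i N_i)$ one must choose $W$ so that near-witnesses cannot multiply adversarially. Finding the right $W$ (and proving the matching lower bound on its near-completion count) is the novel ingredient; the rest of the argument is a direct case analysis on $P=0$ versus $P=1$ inputs combined with the pullback from Step~1.
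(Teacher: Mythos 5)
Your Step 1 reduction—grouping the $k$ parts into three super-parts of sizes $\ell_1,\ell_2,\ell_3$ (with $\max_i \ell_i = \lceil k/3\rceil$, which forces a split of the form $k_1 + 2k_2 = k$ when balanced) and treating the instance as a tripartite $3$-uniform super-hypergraph on super-parts of sizes $n^{\ell_i}$—is exactly the paper's viewpoint, and your observation that a $\prod_i S_{N_i}$-invariant property on the super-hypergraph pulls back to an $S_n^{\otimes k}$-invariant $k$-partite property with identical sensitivity is correct. You also correctly isolate the crux: bounding $s_0$ requires a minterm whose near-completions cannot multiply, and the obvious candidates (isolated super-vertex, all-ones super-line, $3$-dimensional matching) fail because many near-witnesses can share a one-flip completion without overlapping structurally.

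However, the proposal stops precisely at the point where the proof actually begins. You write ``Finding the right $W$ \ldots is the novel ingredient'' and leave $W$ unspecified; that is not a minor detail but the entire content of the theorem, and without it there is no bound to check. The specific witness you float (a $3$-dimensional matching of size $\Theta(\max_i N_i)$, possibly with an unspecified compatibility condition) is not rescued by your acknowledgment that ``the rest is a direct case analysis''—the case analysis cannot even be started without the object it is supposed to analyze. The paper's construction is qualitatively different from a matching: it fixes a single distinguished super-vertex $\vec{c}=\vec{1}_c$ in the smallest part (of size $n^{k_1}$), picks three distinguished super-vertices $\vec{1},\vec{2},\vec{3}$ in each of the other two parts, and defines the minterm $p$ on the ``cross'' $\bigl(\{\vec{1},\vec{2},\vec{3}\}\times[N]\bigr)\cup\bigl([N]\times\{\vec{1},\vec{2},\vec{3}\}\bigr)$ (with the third coordinate pinned to $\vec{1}_c$), using a deliberately asymmetric $0/1$ pattern on the $3\times 3$ core and the two arms so that any isomorphic copy $p^{\sigma}$ within Hamming distance $2$ of a given $p^{\pi}$ and sharing $\sigma(\vec{c})=\pi(\vec{c})$ must essentially coincide with it, giving only $O(1)$ nearby copies per choice of $\vec{c}$ and hence $s_0(f)=O(n^{k_1})$. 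The support has size $O(n^{k_2})$, giving $s_1(f)\leq C_1(f)=O(n^{k_2})$. This rigidity argument—pin one coordinate to a single slot and make the two-dimensional cross pattern asymmetric enough to rule out nontrivial near-isomorphisms—is the missing idea; the ``matroid-rank'' and ``scarce structural slot'' heuristics you gesture at do not by themselves yield it.
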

\begin{theorem}\label{thm:partite2}
For any $k\geq 4$ satisfying $k\equiv1$ (\emph{mod 3}) , there exist a sequence of $k$-partite $k$-uniform hypergraph properties $f:\{0,1\}^{n^k}\rightarrow\{0,1\}$ such that $s(f)=O(n^{\lceil k/3 \rceil-1/2})$.
\end{theorem}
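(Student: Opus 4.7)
The plan is to refine the block decomposition underlying Theorem \ref{thm:partite1}. Write $k = 3m+1$ with $m \geq 1$, so $\lceil k/3 \rceil - 1/2 = m + 1/2$. A direct application of Theorem \ref{thm:partite1} would split the $k$ coordinates into $m$ triples plus one leftover, apply an $O(n)$-sensitivity $k=3$ gadget to each triple, and pay an extra $n$ factor for the leftover, giving $O(n^{m+1})$. My plan is to fuse the leftover coordinate with one of the triples into a single four-coordinate block and equip this block with a dedicated $k=4$ gadget of sensitivity $O(n^{3/2})$ (rather than the $O(n^2)$ that arises from naively joining two $k=3$ gadgets). Combining such a $k=4$ gadget with the $k=3$ gadgets on the remaining $m-1$ triples, through the same outer combiner used in Theorem \ref{thm:partite1}, yields total sensitivity $O(n^{m-1}) \cdot O(n^{3/2}) = O(n^{m+1/2})$, which is the claimed bound.

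The heart of the argument is therefore a $4$-partite $4$-uniform hypergraph property $g : \{0,1\}^{n^4} \to \{0,1\}$, invariant under $S_n^4$, with sensitivity $O(n^{3/2})$. My plan to build $g$ is to view an input $H$ on parts $A, B, C, D$ as the family $(H_a)_{a \in A}$ of $3$-partite $3$-uniform hypergraphs indexed by $a$, to apply the $k=3$ gadget of Theorem \ref{thm:partite1} slice-wise to produce a slice signature $y_a$, and then to aggregate the $y_a$ by an $S_n$-invariant outer rule that is ``active'' on only a fragile set of $O(\sqrt n)$ slices at any fixed input. Since a single bit flip in $H$ alters at most one slice signature, the number of sensitive edges is bounded by (fragile slices)$\times$(per-slice sensitivity) $= O(\sqrt n) \cdot O(n) = O(n^{3/2})$. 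A natural candidate outer rule is a ``threshold-in-a-window'' predicate whose $S_n$-invariant statistic of the $y_a$'s lies in a discrete window of width $\sqrt n$, so that single-edge perturbations outside the window leave $g$ unchanged.

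The main obstacle I anticipate is justifying the $O(\sqrt n)$ fragile-set bound. Because any non-trivial $S_n$-invariant Boolean function on $n$ inputs has sensitivity $\Omega(n)$ near its transitions, the savings cannot come from low sensitivity of the outer combiner alone; they must come from showing that, at every input, the slice signatures produced by the $k=3$ gadget are \emph{stable} under single-edge perturbations on all but $O(\sqrt n)$ slices. Proving this uniform stability, together with the complementary check that $S_n^4$-invariance survives the composition, is the main combinatorial content. Once the $k=4$ key lemma is established, the block-decomposition bookkeeping for general $k = 3m+1$ mirrors the proof of Theorem \ref{thm:partite1} essentially verbatim, plugging the new $k=4$ gadget in where one would otherwise have paired a $k=3$ gadget with the leftover coordinate.
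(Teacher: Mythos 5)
Your proposal diverges substantially from the paper's proof, and the central lemma you defer to is, as stated, not salvageable.

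The paper's proof is not compositional in the sense you describe. Theorem~\ref{thm:partite1} does not split the $k$ coordinates into $m$ triples with per-triple gadgets; it divides them into \emph{three} blocks of sizes $k_2,k_2,k_1$ (with $k_1+2k_2=k$) and writes down a single minterm $p'$ whose support lies in a thin slab $\{\vec a \text{ or } \vec b \in\{\vec1,\vec2,\vec3\}\}\times\{\vec1_c\}$, giving $s_1\le|p'|=O(n^{k_2})$ and $s_0=O(n^{k_1})$ from counting isomorphs with fixed $\sigma(\vec c)$ and pairwise distance $\le2$. For $k=3l+1$, Theorem~\ref{thm:partite2} simply splits the $\vec c$ block into $\vec c\in[n]^l$ and a scalar $\vec d\in[n]$, and thickens the minterm in the $\vec d$ direction over a window $[\sqrt n]$ while pinning a small $3\times3$ block for $\vec d\notin[\sqrt n]$. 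This keeps $f$ a minterm function, so $s_1\le|p|=O(n^{l+1/2})$ is automatic, and $s_0=O(n^{l+1/2})$ follows because at most $\sqrt n$ isomorphs share a given $\sigma(\vec c)$ and sit within pairwise distance $2$. No slice-wise composition and no outer symmetric combiner are involved.

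The gap in your $k=4$ gadget is the claimed $O(\sqrt n)$ fragile-slice bound, which you flag as the main obstacle but do not establish, and which fails for the threshold-in-a-window combiner. Take $\mathrm{outer}(y)=1$ iff $\sum_a y_a\in[t,t+\sqrt n]$. At a $1$-input of $g$ with $\sum_a y_a=t$, \emph{every} slice $a$ with $y_a=1$ is sensitive for the combiner (flipping that $y_a$ exits the window), and each such slice is by definition a $1$-input of the inner $k{=}3$ gadget $h$, hence has $s(h,H_a)=\Theta(n)$ sensitive edges. Nothing prevents an adversarial $H$ from having $t=\Theta(n)$ such slices, in which case $s(g,H)=\Omega(n^2)$, not $O(n^{3/2})$. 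More generally, any non-constant $S_n$-invariant combiner on $n$ slice-bits has worst-case sensitivity $\ge n/2$, so the savings cannot come from the combiner; and there is no structural reason that, at every input, all but $O(\sqrt n)$ of the $n$ slices are simultaneously insensitive under $h$. The paper avoids this entirely by never leaving the minterm world: its distance-$\le2$ isomorph count is what replaces, and is strictly stronger than, the fragile-slice bound you would need. To repair your plan you would have to replace the symmetric combiner with a minterm-style outer rule that forces at most $O(\sqrt n)$ slices to be near a transition of $h$ at any input, which is essentially what the paper's thickened minterm achieves directly.

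Finally, the outer ``same combiner used in Theorem~\ref{thm:partite1}'' step for general $k=3m+1$ is not well-defined as written: Theorem~\ref{thm:partite1} has no gadget-combiner structure to reuse, and a product of independently-analyzed per-block gadgets would not in general yield a single $k$-partite hypergraph property whose sensitivity multiplies as $O(n^{m-1})\cdot O(n^{3/2})$. The paper handles general $k$ by lifting the $k=3$ (resp.\ $k=4$) minterm coordinatewise into vectors, keeping one global minterm; this is what makes the bookkeeping go through.
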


Let $G$ be an Abelian group, the fundamental theorem of finite abelian groups states that $G\cong C_{m_1}\times\cdots\times C_{m_l}$, where $C_m$ is the cyclic group of order $m$ and $|G|=\prod_{i=1}^{l}m_i$.
\begin{theorem}\label{thm:abelian}
Let $G\leq S_n$ be a transitive Abelian group, then  there exists a Boolean function $f:\{0,1\}^n\rightarrow\{0,1\}$ invariant under $G$ such that $s(f)\leq \alpha n^{1/3}$, where $\alpha$ is a number only depending on $l$.
\end{theorem}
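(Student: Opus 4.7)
The plan is to combine Chakraborty's cyclically invariant function of sensitivity $\Theta(m^{1/3})$ on $m$ bits with the three-factor direct-product construction that underlies Theorem~\ref{thm:partite1}. As preparation, I note that because $G\le S_n$ is a transitive \emph{Abelian} subgroup, every point-stabiliser is trivial (a non-trivial element fixing one point would, by commutativity, fix every point in its orbit), so the action is regular, $|G|=n$, and I may identify the $n$ input coordinates with the elements of $G$ acting on itself by translation.

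The first step is to try to balance the invariant-factor decomposition $G\cong C_{m_1}\times\cdots\times C_{m_l}$: partition $\{1,\ldots,l\}$ greedily into three disjoint blocks $A,B,C$ and set $n_A=\prod_{i\in A}m_i$ (and analogously $n_B,n_C$). When every $m_i$ is at most $n^{1/3}$, a standard greedy bin-packing argument yields a 3-partition with $\max(n_A,n_B,n_C)$ bounded by a constant depending on $l$ times $n^{1/3}$, and applying the three-factor product construction to $G=G_A\times G_B\times G_C$ then produces a $G$-invariant function of sensitivity $O(n^{1/3})$.

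The delicate case is when some cyclic factor, say $C_{m_1}$, is itself larger than $n^{1/3}$; then no three-way product partition is balanced, and the product construction alone would cost sensitivity $\Theta(m_1)$ on that factor. My plan is instead to apply Chakraborty's $C_{m_1}$-invariant minterm function $g$ on the ``long'' direction, whose sensitivity is $O(m_1^{1/3})\le O(n^{1/3})$, and to extend it to a $G$-invariant function by tensoring $g$ with a function obtained by recursion on the complementary direct factor $H$ of size $n/m_1$ (which has strictly fewer invariant factors). Concretely, the aggregate function is the minterm indicator of ``some $G$-translate of a carefully chosen seed pattern $P:G\to\{0,1\}$ lies below the input'', where $P$ is built from the Chakraborty seed along $C_{m_1}$ and the recursive seed along $H$; invariance follows from the $G$-translation structure, and induction on $l$ handles every case.

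The main obstacle will be controlling the sensitivity in the tensored step: naive AND/OR compositions across the $|H|$ fibres inflate sensitivity multiplicatively, which is fatal. The combinatorial design of $P$ must therefore ensure that every input bit lies in the support of only a bounded (depending on $l$) number of potential witness translates of $P$, so that sensitivity is dominated by $\max(s(g), s_H)=O(n^{1/3})$. Establishing this ``bounded fan-out'' property---together with the correct interaction between Chakraborty's cyclic seed and the product-construction seed---is the technical heart of the argument and the source of the constant $\alpha=\alpha(l)$.
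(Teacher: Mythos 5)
Your opening reduction is correct and matches the paper: a transitive Abelian action has trivial point-stabilisers, hence is regular, so the $n$ coordinates can be identified with the group elements and relabeled as tuples $(i_1,\ldots,i_l)$ with $i_j\in[m_j]$. After that, however, you diverge from the paper and, more importantly, you leave the central step unresolved.

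The paper does not split into balanced/unbalanced cases and does not recurse. It applies Chakraborty's cyclically-invariant seed $p_{m_j}$ independently along \emph{every} cyclic factor $C_{m_j}$ and combines them with a coordinate-wise XOR, $p(i_1,\ldots,i_l)=\bigoplus_{j=1}^{l}p_{m_j}(i_j)$, with the convention $\star\oplus b=\star$. The resulting $f$ is the minterm function generated by $p$ under the regular $G$-action. The support of $p$ is the product of the supports, giving $|p|=\prod_j|p_{m_j}|$, and the $s_0$ bound is controlled, factor by factor, by the small number of cyclic shifts of $p_{m_j}$ at Hamming distance $\le 2$ from a fixed one (Chakraborty's key property). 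This single construction handles large and small $m_j$ uniformly; there is no separate ``long factor'' case and no need to invoke the $k$-partite product construction of Theorem~\ref{thm:partite1}.

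Your proposal, by contrast, has a genuine gap exactly where you flag ``the technical heart.'' In the unbalanced case you say you will ``tensor $g$ with a function obtained by recursion on $H$'' via a ``carefully chosen seed pattern $P$,'' but you never say what the combination rule is. The observation that naive AND/OR fibrewise composition blows up sensitivity is correct, and the ``bounded fan-out'' desideratum you state is the right intuition, but a desideratum is not a construction: without specifying how the Chakraborty seed and the recursive seed interact at each coordinate, nothing is proved. The missing idea is precisely the $\star$-absorbing XOR above, which realises bounded fan-out automatically because a coordinate of the combined minterm is sensitive only if it is non-$\star$ in \emph{every} factor. Secondary, smaller issues: (i) the balanced case leans on Theorem~\ref{thm:partite1} with unequal part sizes and with only a cyclic (not symmetric) group acting on each part; while the weaker symmetry is harmless (a function invariant under $S_{n_A}\times S_{n_B}\times S_{n_C}$ is \emph{a fortiori} $G$-invariant), the unequal-size extension of Theorem~\ref{thm:partite1} is asserted rather than checked; (ii) the case split makes the constant $\alpha(l)$ depend on how the recursion bottoms out, which you do not track. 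Replacing the whole two-case scheme by the XOR-of-Chakraborty-seeds construction removes both concerns at once.
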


On the other side, we prove a lower bound of the sensitivity of $k$-uniform hypergraph properties, which implies the sensitivity conjecture of $k$-uniform hypergraph properties.
\begin{theorem}\label{lowerbound}
For any constant $k$ and any non-trivial $k$-uniform hypergraph property $f$,  $s(f)=\Omega(n)$.
\end{theorem}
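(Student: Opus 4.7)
The proof extends Tur\'an's $n/4$ sensitivity bound for graph properties to the $k$-uniform hypergraph setting. Normalize so that $f(\emptyset)=0$ (replacing $f$ by $1-f$ preserves sensitivity). Let $H^\star$ be a $1$-input with the fewest edges, $m:=|E(H^\star)|$. By minimality, $f(H^\star-e)=0$ for every $e\in E(H^\star)$, so all $m$ edges of $H^\star$ are sensitive at $H^\star$ and $s(f)\geq m$. If $m\geq n/(2k)$ we are done; otherwise $|V(H^\star)|\leq km<n/2$ and the isolated-vertex set $V_0:=[n]\setminus V(H^\star)$ satisfies $|V_0|\geq n/2$.

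The central tool is a pendant swap lemma. Suppose a $1$-input $\widetilde H$ has a pendant vertex $v$ lying in a unique edge $\widetilde e\in \widetilde H$, and $f(\widetilde H-\widetilde e)=0$. Then for every vertex $u$ isolated in $\widetilde H$, the transposition $\sigma_{uv}\in S_n$ maps $\widetilde H$ to $\widetilde H-\widetilde e+\widetilde e_u$ where $\widetilde e_u:=(\widetilde e\setminus\{v\})\cup\{u\}$, so the $S_n$-invariance of $f$ yields $f(\widetilde H-\widetilde e+\widetilde e_u)=1$. Thus each $\widetilde e_u$ is a sensitive bit at $\widetilde H-\widetilde e$, producing $n-|V(\widetilde H)|$ sensitive bits at that input. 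When $H^\star$ itself has a pendant this immediately gives $s(f)\geq n/2=\Omega(n)$.

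The remaining case is $H^\star$ with minimum degree $\geq 2$. Here I exploit $\mathrm{Aut}(H^\star)\supseteq S_{V_0}$: the $\mathrm{Aut}(H^\star)$-orbits on $\binom{[n]}{k}$ refine the $S_{V_0}$-orbits $\{F\cup U:U\in\binom{V_0}{k-|F|}\}$ indexed by $F\subseteq V(H^\star)$ with $|F|\leq k$, which have sizes $\binom{|V_0|}{k-|F|}$. The sensitive-bit set at $H^\star$ is a union of such orbits, so if any orbit with $|F|\leq k-1$ is sensitive we immediately get $\binom{|V_0|}{k-|F|}\geq |V_0|=\Omega(n)$ sensitive bits at $H^\star$. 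Otherwise $f(H^\star+e)=1$ for every edge $e\notin H^\star$ meeting $V_0$; in particular, for every $v\in V(H^\star)$ and every $e^*=\{v,u_2,\ldots,u_k\}$ with $u_i\in V_0$, $H^\star+e^*$ is a $1$-input with $u_2,\ldots,u_k$ as pendants. For each edge $e_v\in E(H^\star)$ containing $v$, the hypergraph $H^\star+e^*-e_v$ has exactly $m$ edges. If it is a $1$-input, minimality forces it to be a minimum $1$-input in which $u_2$ remains pendant and $(H^\star+e^*-e_v)-e^*=H^\star-e_v$ is a $0$-input, so the pendant swap lemma applied to this new minimum $1$-input yields $\Omega(n)$ sensitive bits at $H^\star-e_v$. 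Otherwise $f(H^\star+e^*-e_v)=0$ for every valid pair, so $e_v$ is sensitive at $H^\star+e^*$, and I iterate the orbit analysis at $H^\star+e^*$, continuing either until a new minimum $1$-input with a pendant appears, or until the vertex support fills $[n]$; the terminal configuration then has $O(n)$ edges and contains a vertex of degree $O_k(1)$, which can be used to locate a pendant-based sensitive input.

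The principal obstacle is to control this last persistent no-pendant scenario and to guarantee termination of the iterative bootstrap. A suitable monovariant (for example, the vertex support size of the current $1$-input together with the number of minimum $1$-input orbits), combined if necessary with the duality $\widetilde f(H):=1-f(K_n^k\setminus H)$ (which satisfies $s(\widetilde f)=s(f)$ and may land in a more favorable regime), should close the argument; the hidden constant in $\Omega(n)$ will then depend on $k$.
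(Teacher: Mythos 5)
Your proposal diverges substantially from the paper's proof, and it has a genuine gap that you yourself flag. The paper's argument is short and self-contained: after normalizing $f(\overline{K_n})=0$ and taking a minimum-edge $1$-input $G$ (with $m<\frac{n}{k+2}$ edges, else done via the AND-subfunction), it picks a non-isolated vertex $u$ adjacent to $(k-1)$-sets $e^{(k-1)}_1,\dots,e^{(k-1)}_d$ and, for each $i$, considers the restriction $g_i(x_1,\dots,x_t)=f\bigl(G_{i-1}+\sum_j x_j(e^{(k-1)}_i,u_j)\bigr)$ over the $t=|I|$ isolated vertices. Each $g_i$ is symmetric, so by Tur\'an's bound a nonconstant symmetric function on $t$ variables has sensitivity $\geq t/2>\frac{n}{k+2}$; hence under the assumption $s(f)<\frac{n}{k+2}$ each $g_i$ is constant, and the function value is preserved while all copies of $u$'s neighborhood are written onto every isolated vertex. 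Reversing the process (again with constant symmetric restrictions) then strips all those edges plus the original edges through $u$, producing a $1$-input with strictly fewer edges than $G$ — a contradiction. The symmetric-function sensitivity bound is doing all the work; no case analysis over pendants or orbits is needed.

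Your pendant-swap lemma and the $S_{V_0}$-orbit observation are both correct as far as they go, and the case where some orbit with $|F|\leq k-1$ is sensitive, or where $H^\star$ has a pendant, is handled cleanly. The gap is exactly where you point it out: the iterative bootstrap in the no-pendant, no-sensitive-small-orbit regime is not a proof. Once you pass from $H^\star$ to $H^\star+e^*$ you lose minimality, so "flipping an edge of the current $1$-input gives a $0$-input" no longer holds, and the orbit analysis gives you nothing to contradict. You have no well-founded monovariant: each iteration shrinks the isolated set by $k-1$ vertices, so it could plausibly run for $\Omega(n)$ rounds, by which point $|V_0|$ is no longer $\Omega(n)$ and your orbit sizes collapse. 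The terminal claim that a vertex of degree $O_k(1)$ "can be used to locate a pendant-based sensitive input" does not follow — bounded degree is not the same as degree one, and the pendant-swap lemma additionally requires the pendant edge's removal to flip the function value, which you have no control over in the terminal configuration. The duality $\widetilde f(H)=1-f(K_n^{(k)}\setminus H)$ is a reasonable tool to have in reserve, but you have not shown it lands you in a regime where the earlier cases apply. In short, your proof covers the easy branches but the hard branch is left as a plan, not an argument; the paper sidesteps the whole difficulty with the symmetric-function restriction trick, which you never invoke.
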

 Similar lower bound holds for the sensitivity of $k$-partite $k$-uniform hypergraph properties.
\begin{theorem}\label{lowerbound2}
For any constant $k$ and any non-trivial $k$-partite $k$-uniform hypergraph property $f$,  $s(f)=\Omega(n)$, where $n$ is the number of vertices in one partition.
\end{theorem}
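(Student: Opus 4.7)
The plan is to combine a Tur\'an-style extremal argument with the rich $S_n^k$-symmetry of $f$. Without loss of generality assume $f(\emptyset)=0$ (else pass to $1-f$, which has the same sensitivity). Let $H$ be an edge-minimal hypergraph with $f(H)=1$; this exists by nontriviality. If $|E(H)| \geq n/(2k)$, minimality forces $f(H\setminus\{e\})=0$ for every $e\in E(H)$, so each edge is $1$-sensitive at $H$ and $s(f,H)\geq n/(2k)=\Omega(n)$. Otherwise $|E(H)|<n/(2k)$, and in each part $V_i$ the set $U_i$ of isolated vertices of $H$ satisfies $|U_i|\geq (1-\tfrac{1}{2k})n=\Omega(n)$.

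Fix $e^* = (v_1^*,\ldots,v_k^*)\in E(H)$ and a coordinate $i\in [k]$. For $u\in U_i$, let $e_u$ denote the edge obtained from $e^*$ by replacing the $i$-th coordinate with $u$; note $e_u\notin E(H)$ since $u$ is isolated. The subgroup $S(U_i)$ permuting $U_i$ stabilizes $H$, so by $S_n^k$-invariance the value $f(H\cup S)$ for $S\subseteq\{e_u\}_{u\in U_i}$ depends only on $|S|$; write it as $g(j)$, with $g(0)=1$. If $g$ has a first transition at some $j^*$ (so $g(j^*-1)=1$ and $g(j^*)=0$), then at the input $X=H\cup S_{j^*-1}$ every $e_u\notin S_{j^*-1}$ (of which there are $|U_i|-j^*+1$) is $0$-sensitive, so $s(f,X)\geq |U_i|-j^*+1$; at $X'=H\cup S_{j^*}$ every edge in $S_{j^*}$ is $1$-sensitive, so $s(f,X')\geq j^*$. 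The maximum of the two is at least $\lceil (|U_i|+1)/2\rceil=\Omega(n)$, concluding this branch.

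If instead $g$ is constantly $1$, pass to the enlarged positive instance $H \cup \{e_u\}_{u\in U_i}$—which still has only $O(n)$ edges—and iterate the fan construction through a new coordinate or a new base edge. In parallel, apply the dual analysis at an edge-maximal negative instance $H^-$: if $n^k-|E(H^-)|\geq n$ then every non-edge is $0$-sensitive at $H^-$ and we are done; otherwise the same isolated-vertex/fan trick applies on the vertices that are ``fully covered'' by $H^-$. The expectation is that after $O_k(1)$ rounds of primal/dual fan extension, either a transition occurs on an orbit of size $\Omega(n)$ (handled as above), or the two extremal hypergraphs themselves become large enough to fall into the easy cases directly.

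The main obstacle is the constant-$g$ scenario. One must argue that only boundedly many (in $k$) rounds of fan extension are needed before a transition is forced on an orbit of size $\Omega(n)$. Tracking how each extension consumes isolated vertices and shrinks the stabilizer—while using that each extension adds only $O(n)$ edges and we perform $O_k(1)$ extensions, so the stabilizer still contains an $S_m$-factor acting on $m=\Omega(n)$ vertices in at least one part—and correctly interleaving with the dual side at $H^-$ is the delicate technical point. This balance is also why the implicit constant in $\Omega(n)$ is allowed to depend on $k$.
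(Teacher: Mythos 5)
Your set-up is on the right track and matches the paper's strategy for the analogous Theorem~\ref{lowerbound} (which the authors say transfers to Theorem~\ref{lowerbound2} via the bipartite reduction of~\cite{GMSZ13}): take an edge-minimal positive instance $H$, dispatch the case $|E(H)|\geq\Omega(n)$ directly, extract $\Omega(n)$ isolated vertices in some part, build a fan of edges $\{e_u\}_{u\in U_i}$ over an $S(U_i)$-orbit, and read off a symmetric one-dimensional profile $g(j)=f(H\cup S_j)$. Your handling of the case where $g$ has a transition is also correct.

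The genuine gap is the constant-$g$ case, and your proposed fix (iterate fan extensions $O_k(1)$ times and ``interleave with the dual side'') is not a proof: you give no mechanism forcing a transition within a bounded number of rounds, and indeed there is no reason a transition must ever appear. The missing idea is that constancy of $g$ should not be treated as an obstruction to be escaped by re-extension; it is exactly the lever used to \emph{delete} edges and contradict minimality. Concretely, pick a non-isolated vertex $u$ in the fan's part with $\deg(u)=d>0$ and let $e^{(k-1)}_1,\dots,e^{(k-1)}_d$ be its incident $(k-1)$-tuples. For each $i$, symmetry over the isolated vertices $u_1,\dots,u_t$ gives a symmetric subfunction that, under the low-sensitivity hypothesis $s(f)<cn$, must be constant; this lets you add \emph{all} edges $(e^{(k-1)}_i,u_j)$ for every $i,j$ without changing $f$. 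At that point $u,u_1,\dots,u_t$ all have identical links to each $e^{(k-1)}_i$, so you may run the same symmetric-subfunction argument in reverse (now on $t+1$ variables including $u$ itself) to delete every edge $(e^{(k-1)}_i,\cdot)$ over $\{u,u_1,\dots,u_t\}$ without changing $f$. The net effect is to remove all $d$ original edges through $u$, producing a positive instance with strictly fewer edges than $H$ --- contradicting minimality and closing the argument in a fixed, finite number of steps. Without this symmetrize-then-delete step your proposal does not terminate, so it does not establish the theorem.
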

The proof of this theorem is very similar with the proof of Theorem 2 in~\cite{GMSZ13}, except that we divide $k$-partitions into two sets of size $1$ and $k-1$ respectively first. We omit the proof in this paper.\\

\noindent\textbf{Related Work.}
Sensitivity complexity and block sensitivity are first introduced by Cook, Dwork and Reischuk~\cite{Cook,CDR86}
and Nisan~\cite{nisan1991crew} respectively, to study the time complexity of CREW-PRAMs.
Block sensitivity has been shown to be polynomially related to a number of other complexity measures~\cite{buhrman}, such as  decision tree complexity, certificate complexity,  polynomial degree and quantum query complexity, etc, except sensitivity. The famous sensitivity conjecture, proposed by Nisan and Szegedy~\cite{NS92}, asserts that block sensitivity and sensitivity complexity are also polynomially related. On one side, it is easy to see $s(f)\leq bs(f)$ for any Boolean function $f$ according to the definitions. On the other side, it is much more challenging to prove or disprove block sensitivity is polynomially bounded by sensitivity.
Despite of a lot of effort, the best known upper bound is exponential: $bs(f)\leq\max\{2^{s(f)-1}(s(f)-\frac{1}{3}),s(f)\}$~\cite{APV16}. Recently, He, Li and Sun further improve the upper bound to $(\frac{8}{9} + o(1))s(f)2^{s(f) - 1}$~\cite{Own}. The best known separation between sensitivity and block sensitivity is quadratic~\cite{AS11}: there exist a sequence of Boolean functions $f$ with $bs(f)=\frac{2}{3}s(f)^2-\frac{1}{3}s(f)$.
For an excellent survey on the sensitivity conjecture, see \cite{HKP11}. For other recent progress, see \cite{ Bop12,AP14,ABG+14,AV15,GKS15,Sze15,GNS+16,GSTW16,Tal16,Sha16,LZ16,ST16}.\\


\noindent\textbf{Organization.} We present some preliminaries in Section~\ref{section:preliminaries}, and give the proofs of Theorem~\ref{upperbound:general} and Theorem~\ref{upperbound:3k+1} in Section~\ref{section:proof_1}. We give the constructions of $k$-partite $k$-uniform hypergraph properties (Theorem~\ref{thm:partite1} and \ref{thm:partite2}) and the proof of Theorem~\ref{thm:abelian} in Section~\ref{section:general} and give the proof of Theorem~\ref{lowerbound} in Section~\ref{section:proof_2}. Finally, we conclude this paper with some open problems in Section~\ref{section:conclusion}.

\section{Preliminaries}\label{section:preliminaries}
Let $f:\{0,1\}^n\rightarrow\{0,1\}$ be a Boolean function and $[n]=\{1,2,\cdots,n\}$. For an input $x\in\{0,1\}^n$ and a subset $B\subseteq[n]$, $x^B$ denotes the input obtained by flipping all the bit $x_j$ such that $j\in B$.
\begin{definition}
The {\em sensitivity}  of $f$ on input $x$ is defined as $s(f,x):=|\{i|f(x)\neq f(x^{\{i\}})\}|$. The sensitivity, 0-sensitivity and 1-sensitivity of the function $f$ are defined as $s(f):=max_xs(f,x)$, $s_0(f)=max_{x\in f^{-1}(0)}s(f,x)$ and $s_1(f)=max_{x\in f^{-1}(1)}s(f,x)$ respectively.
\end{definition}
\begin{definition}
The {\em block sensitivity}  $bs(f,x)$ of $f$ on input $x$ is the maximum number of disjoint subsets $B_1,B_2,\cdots,B_r$ of $[n]$ such that for all $j\in[r]$, $f(x)\neq f(x^{B_j})$. The block sensitivity of $f$ is defined as $bs(f)=max_x bs(f,x)$.
\end{definition}
\begin{definition}
A {\em partial assignment} is a function $p:[n]\rightarrow\{0,1,\star\}$. We call $S=\{i|p_i\neq\star\}$ the support of this partial assignment. We define the size of $p$ denoted by $|p|$ to be $|S|$. We call x  a (full) assignment if $x:[n]\rightarrow\{0,1\}$. We say $x$ is consistent with $p$ if $x|_{S}=p$, i.e., $x_i=p_i$ for all $i\in S$.\footnote{ The function $p$ can be viewed as a vector, and  we sometimes use $p_i$ to represent $p(i)$.}
\end{definition}
\begin{definition}
For $b\in\{0,1\}$, a {\em $b-$certificate} for $f$ is a partial assignment $p$ such that $f(x)=b$ whenever $x$ is consistent with $p$.

The {\em certificate complexity} $C(f,x)$ of $f$ on input $x$ is the minimum size of $f(x)$-certificate that is consistent with $x$. The certificate complexity of $f$ is $C(f)=\max_x C(f,x)$.

 The {\em 1-certificate complexity} of $f$ is $C_{1}(f)=\max_{x\in f^{-1}(1)} C(f,x)$, and similarly we define $C_{0}(f)$.
\end{definition}
According to the definitions, it's easy to see $s(f)\leq bs(f)\leq C(f)$, $s_0(f)\leq C_0(f)$ and $s_1(f)\leq C_1(f)$.
\begin{definition}
Let $p$ and $p'$ be two partial assignments, the distance between $p$ and $p'$ is defined as $dist(p,p')=|\{i|p_i=1$ and $p'_i=0$, or $p_i=0$ and $p'_i=1\}|$.
\end{definition}
\begin{definition}
Let $f:\{0,1\}^n\rightarrow\{0,1\}$ be a Boolean function and G be a subgroup of $S_n$, we say that $f$ is invariant under $G$ if $f(x_1,\cdots,x_n)=f(x_{\sigma(1)},\cdots,x_{\sigma(n)})$ for any $x\in\{0,1\}^n$ and any $\sigma\in G$.

A Boolean function $f$ is called {\em transitive} (or {\em weakly symmetric}) if $G$ is a transitive group\footnote{A group $G\leq S_n$ is transitive if for every $i<j$, there exists a $\sigma\in G$ such that $\sigma(i)=j$.}. A Boolean function $f$ is called symetric if $G=S_n$.
\end{definition}
\begin{definition}
A transitive Boolean function $f$ is called minterm-transitive if there exist a partial assignment $p$ such that $f(x)=1$ if and only if $x$ is consistent with $p^{\sigma}:=(p_{\sigma(1)},p_{\sigma(2)},\cdots,p_{\sigma(n)})$ for some $\sigma\in G$. We call $p$ the minterm.
\end{definition}
A Boolean string can represent a graph in the following manner:  $x_{(i,j)}=1$ means there is an edge connecting vertex $i$ and vertex $j$, and $x_{i,j}=0$ means there is no such edge. Graph properties are functions which are independent with the labeling of vertices, i.e. two isomorphic graphs have the same function value.
\begin{definition}
A Boolean function $f:\{0,1\}^{n\choose 2}\rightarrow\{0,1\}$ is called a graph property if for every input $x=(x_{(1,2)},\cdots,x_{(n-1,n)})$ and every permutation $\sigma\in S_n$,
$$f(x_{(1,2)},\cdots,x_{(n-1,n)})=f(x_{(\sigma(1),\sigma(2))},\cdots,x_{(\sigma(n-1),\sigma(n))}).$$
\end{definition}

Similarly, we define $k$-uniform hypergraph properties.
\begin{definition}\label{def:hypergraph}
A Boolean function $f:\{0,1\}^{n\choose k}\rightarrow\{0,1\}$ is called a $k$-uniform hypergraph property if for every input $x=(x_{(1,2,\ldots,k)},\cdots,x_{(n-k+1,\ldots,n-1,n)})$ and every permutation $\sigma\in S_n$,
$$f(x_{(1,2,\ldots,k)},\cdots,x_{(n-k+1,\ldots,n-1,n)})=f(x_{(\sigma(1),\sigma(2),\ldots,\sigma(k))},\cdots,x_{(\sigma(n-k+1),\ldots,\sigma(n-1),\sigma(n))}).$$
Let $p$ be a partial assignment and $\sigma\in S_n$, we define $\sigma(p)$ as $\sigma(p)_S=p_{\sigma(S)}$ where $S$ is any subset of [n] of size $k$ and $\sigma(S)=\{\sigma(i)|i\in S\}$.
\end{definition}

\begin{definition}
A Boolean function $f:\{0,1\}^{n^k}\rightarrow\{0,1\}$ is called $k$-partite $k$-uniform hypergraph property, if for every input  $x=(x_{(1,1,\cdots,1)},\cdots,x_{(n,n,\cdots,n)})$ and every $\sigma=(\sigma_1,\cdots,\sigma_k)\in S_n^{\otimes k}$,
$$f(x_{(1,1,\cdots,1)},\cdots,x_{(n,n,\cdots,n)})=f(x_{(\sigma_1(1),\cdots,\sigma_k(1))},\cdots,x_{(\sigma_1(n),\cdots,\sigma_k(n))}).$$
\end{definition}
It is easy to see that any ($k$-partite) $k$-uniform hypergraph property is transitive.

\section{$k$-Uniform Hypergraph Properties}\label{section:proof_1}
In this section, we give the proofs of Theorem~\ref{upperbound:general} and Theorem~\ref{upperbound:3k+1}.

\begin{proof} ({\bf Proof of Theorem~\ref{upperbound:general}}) The function we construct is a minterm function.
Let $p$ be the minterm defining $f$, and it is constructed as follow:

First, let $k_1$ and $k_2$ be two integers such that $k_1+2k_2=k$ and $k_1,k_2\leq \lceil k/3\rceil$. Let $V=\{v_1,\cdots,v_n\}$ be the set of vertices and $B=\{v_n,v_{n-1},\cdots,v_{n-k_1+1}\}$. For each $1\leq i\leq6$, let $W_i=\{v_{(i-1)k_2+1},\cdots,v_{ik_2}\}$, and $C=\bigcup_{1\leq i\leq 6}W_i$, $D=V\setminus(C\cup B)$.
\begin{itemize}
\item For any $S\subseteq C$ of size $2k_2$,  $p(B\cup S)=0$, except $S=W_i\cup W_{i+1}$ for $i\in[5]$ where $p(B\cup S)=1$.
\item For any $S$ of size $2k_2$ and $k_2\leq|S\cap C|<2k_2$, $p(B\cup S)=1$, except $W_3$ or $W_4\subseteq S$ where $p(B\cup S)=0$.
\item All the other variables are $\star$.
\end{itemize}

If $f(x)=1$ then $x$ is consistent with some $\sigma(p)$, which implies $C(f,x)\leq|p|$. Thus $s_1(f)\leq C_1(f)\leq |p|=\sum_{i=k_2}^{2k_2}{6k_2 \choose i}{n-6k_2-k_1 \choose 2k_2-i}=O(n^{k_2})$.
 Moreover, if $f(x)=0$ then $s(f,x)$ is at most the number of isomorphisms of $p$ (i.e., $\sigma(p)$s) adjacent to $x$, thus according to the triangle inequality, $s_0(f)$ is at most the maximum number of $\sigma(p)$s where the distance between any two of them is at most $2$. We claim that for any $\pi(p)$, there are  $O(1)$ isomorphisms $\sigma(p)$s satisfying $\pi(B)=\sigma(B)$ and $dist(\pi(p),\sigma(p))\leq 2$. It is easy to see that this claim implies $s_0(f)=O(n^{k_1})$ since there are ${n\choose k_1}=O(n^{k_1})$ possible choices of the  $\sigma(B)$s, and this will end the whole proof.

 \begin{Claim}
For any  $\pi(p)$, there are  only $O(1)$ $\sigma(p)$s  satisfying $\pi(B)=\sigma(B)$ and $dist(\pi(p),\sigma(p))\leq 2$.
\end{Claim}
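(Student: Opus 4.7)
The plan is: since $\pi(B) = \sigma(B)$, pre-composing with $\pi^{-1}$ allows me to assume $\pi = \mathrm{id}$, reducing the problem to counting the partial assignments $\sigma(p)$ with $\sigma(B) = B$ (as a set) and $\mathrm{dist}(p, \sigma(p)) \le 2$. Writing $\widehat W_i := \sigma(W_i)$ and $\widehat C := \bigcup_i \widehat W_i$, each $\sigma(p)$ is specified by the ordered tuple $(\widehat W_1, \ldots, \widehat W_6)$ of pairwise disjoint $k_2$-subsets of $V \setminus B$ modulo the flip $i \mapsto 7 - i$, which is the unique nontrivial internal symmetry of $p$ once $B$ is fixed.

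The first step forces $\widehat C = C$. Suppose otherwise and pick $u \in C \setminus \widehat C$ and $v \in \widehat C \setminus C$. I examine the $\binom{6k_2-1}{2k_2-1}$ positions $B \cup S$ with $u \in S \subseteq C$ and $|S| = 2k_2$, where $p$ takes Type A (value $1$) or Type C (value $0$) and $\sigma(p)$ has $|S \cap \widehat C| = 2k_2 - 1 \in [k_2, 2k_2)$, hence takes Type B (value $1$) or Type D (value $0$) depending on whether $\widehat W_3 \subseteq S$ or $\widehat W_4 \subseteq S$. Inclusion--exclusion on the $\widehat W_3, \widehat W_4$ containments bounds the number of ``Type C in $p$, Type B in $\sigma(p)$'' disagreements from below; together with the symmetric count from the $v$-side positions $B \cup S'$ with $v \in S' \subseteq \widehat C$, plus additional Type A vs Type D disagreements produced whenever a Type A set $W_i \cup W_{i+1}$ containing $u$ (or a Type A set of $\sigma$ containing $v$) happens to contain one of $\widehat W_3, \widehat W_4$ (resp.\ $W_3, W_4$), a short case check forces at least three disagreements, contradicting $\mathrm{dist} \le 2$.

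Given $\widehat C = C$, the second step forces $\{\widehat W_3, \widehat W_4\} = \{W_3, W_4\}$; after the flip we may assume $\widehat W_3 = W_3$ and $\widehat W_4 = W_4$. If, say, $\widehat W_3 \notin \{W_3, W_4\}$, then for each of the $\binom{n-k_1-6k_2}{k_2} = \Theta(n^{k_2})$ size-$k_2$ subsets $T$ of $V \setminus (B \cup C)$, the position $B \cup (W_3 \cup T)$ is a Type D zero of $p$ while $\sigma(p)$ labels it Type B (value $1$), since neither $\widehat W_3$ nor $\widehat W_4$ can sit inside $W_3 \cup T$ (their extra element outside $W_3$ lies in $C$, not in $T$); this yields $\Omega(n^{k_2}) \gg 2$ disagreements, and an identical argument with $W_4$ in place of $W_3$ closes the remaining sub-case.

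In the third step, with $\widehat C = C$, $\widehat W_3 = W_3$, $\widehat W_4 = W_4$ fixed, the only remaining freedom is the partition $(\widehat W_1, \widehat W_2)$ of $W_1 \cup W_2$ and $(\widehat W_5, \widehat W_6)$ of $W_5 \cup W_6$. Any non-trivial change on the $W_1 W_2$ side contributes exactly two disagreements, at $B \cup W_2 \cup W_3$ and $B \cup \widehat W_2 \cup W_3$ (where the Type A classifications flip against Type C values), and similarly on the $W_5 W_6$ side; hence a simultaneous non-trivial change on both sides is forbidden, leaving at most $2\binom{2k_2}{k_2} - 1 = O(1)$ partial assignments in total. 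The main obstacle will be Step 1, because $\widehat C$ ranges a priori over $\Theta(n^{6k_2})$ subsets of $V \setminus B$, so the disagreement count must be shown to exceed two uniformly over every partition $(\widehat W_1, \ldots, \widehat W_6)$ of every candidate $\widehat C$; particular care is needed both in the tight regime $k_2 = 1$, where a single-sided count alone is insufficient and the $v$-side contribution must be combined with the $u$-side, and when $|\widehat C \triangle C|$ is large enough that many relevant positions become $\star$ on one side.
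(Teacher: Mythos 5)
Your structural plan is sound, but it does not constitute a proof: the entire mathematical content of the claim lives in your Step~1, and Step~1 is only sketched. You correctly observe that once $\sigma(C)=C$ (and $\sigma(B)=B$) is forced, the bound is essentially immediate, because $p$ is invariant under permutations of $D$ and under permutations inside each block $W_i$ and inside $B$, so $\sigma(p)$ is determined by the images of $B,W_1,\ldots,W_6$ and there are at most $(6k_2+k_1)!=O(1)$ of those. This is exactly the paper's opening remark that $|\{\sigma(p):\sigma(C)=C,\ \sigma(B)=B\}|=O(1)$, and it already gives the claim without your Steps~2 and~3, which are therefore unnecessary for an $O(1)$ bound.

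That leaves Step~1 --- showing $\sigma(B)=B$ and $\sigma(C)\neq C$ force $\mathrm{dist}(p,\sigma(p))\geq 3$ --- carrying the whole proof, and there you offer only a proposed counting strategy (``inclusion--exclusion\ldots a short case check forces at least three disagreements'') and then explicitly flag it as ``the main obstacle,'' noting in particular that the $k_2=1$ regime is tight and that positions can drop to $\star$ on one side when $|\widehat C\triangle C|$ is large. Those are precisely the difficulties, and the paper resolves them by an explicit case analysis on where $W_3,W_4$ and then $W_2,W_5$ are sent, with the $k_2=1$ case handled separately because the counting really is tighter there. So you have identified the right reduction and the right places where it can fail, but you have not closed the argument; as written this is an outline of the paper's proof rather than an alternative to it. There is also a direction slip you should fix before attempting to complete it: with the paper's convention $\sigma(p)_S=p_{\sigma(S)}$, the ``shifted'' block that plays the role of $W_i$ for $\sigma(p)$ is $\sigma^{-1}(W_i)$, not $\sigma(W_i)$; with your definition $\widehat W_i:=\sigma(W_i)$ the equality $|S\cap\widehat C|=|\sigma(S)\cap C|$ you implicitly rely on does not hold.
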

\noindent\emph{Proof.}
It is easy to see that this claim is equivalent to show $|\{\sigma(p)|dist(p,\sigma(p))\leq 2$ and $\sigma(B)=B\}|=O(1)$. The case for $k_2=1$ is a little special, and we discuss this case first.
\subparagraph{Case for $k_2=1$}

\begin{figure}\centering
\begin{tikzpicture}
    \tikzstyle{every node}=[draw,circle,fill=white,minimum size=4pt,
                            inner sep=0pt]

  \draw (0,0) node (W-1) [label=$W_1$]{}
    -- ++(0:1cm) node (W-2) [label=$W_2$]{}
    -- ++(0:1cm) node (W-3) [label=$W_3$]{}
    -- ++(0:1cm) node (W-4) [label=$W_4$]{}
    -- ++(0:1cm) node (W-5) [label=$W_5$]{}
    -- ++(0:1cm) node (W-6) [label=$W_6$]{};

  \draw (0,-2) node (B-1) {};
  \draw (1,-2) node (B-2) {};
  \draw (2,-2) node (B-3) {};
  \node[draw=none] at (3,-2) {$\cdots$};
  \draw (4,-2) node (B-5) {};
  \draw (5,-2) node (B-6) {};

  \foreach \x in {1,2,5,6}
    \foreach \y in {1,2,3,5,6}
        \draw (W-\x)--(B-\y);

  \draw (2.5,-2) ellipse (3 and 0.6);
  \node[draw=none] at (3,-2.3) {\tiny $D$};

\end{tikzpicture}
\caption{The graph to illustrate $p$ for $k_2=1$}\label{figure}
\end{figure}
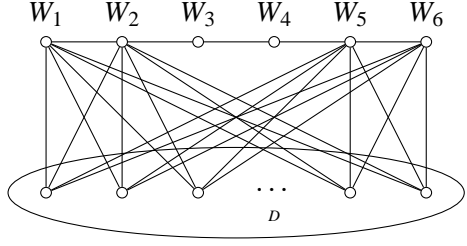

We use Figure \ref{figure} to illustrate $p$. Note that the vertices in $D$ are symmetric and $|C|=O(1)$, thus $|\{\sigma(p)| \sigma(C)=C$ and $\sigma(B)=B\}|=O(1)$. So we only need to consider the set $\{\sigma(p)|\sigma(C)\neq C$ and $\sigma(B)=B \}$,
and we exclude each $\sigma$ case by case:
\begin{enumerate}
\item  $\sigma(W_3)$ or $\sigma(W_4)$ $\in \{W_1,W_2,W_5,W_6\}$.

W.L.O.G, assume $\sigma(W_3)=W_1$, then
\begin{displaymath}
\begin{array}{lll}
dist(p,\sigma(p))&\geq&|\{e\subseteq[n]|\sigma(p)(e)=1,p(e)=0,|e|=k,\{W_3,B\}\subseteq e\}|\\
&\geq&|\{e\subseteq[n]|\sigma(p)(e)=1,|e|=k,\{W_3,B\}\subseteq e\}|\\
&&-|\{e\subseteq[n]|p(e)=\{1,\star\},|e|=k,\{W_3,B\}\subseteq e\}|\\
&=&|\{e\subseteq[n]|p(e)=1,|e|=k,\{W_1,B\}\subseteq e\}|-O(1)\\
&\geq& n-O(1)\geq3.
\end{array}
\end{displaymath}
\item  $\sigma(W_3)$ or $\sigma(W_4)$ $\in D$, and $\{\sigma(W_3),\sigma(W_4)\}\bigcap\{W_1,W_2,W_5,W_6\}=\emptyset$.

W.L.O.G, assume $\sigma(W_3)\in D$, note that for any $v$, $p(B\cup W_3\cup v)\neq\star$, and $|\{v\neq W_4|p(B\cup W_3\cup v)=1\}|=1$. While $|\{v\neq W_4|\sigma(p)(B\cup W_3\cup v\}|=|\{v\neq W_4|p(B\cup \sigma(W_3)\cup \sigma(v)\}|=4$, thus
$dist(p,\sigma(p))\geq 3$.

\item  $\sigma(W_3)=W_3$ and $\sigma(W_4)=W_4$.

\begin{enumerate}
\item $\sigma(W_5)\neq W_5$ and $\sigma(W_2)\neq W_2$.

Since $p(B\cup\sigma(W_2)\cup\sigma(W_3))=p(B\cup\sigma(W_4)\cup\sigma(W_5))=0$ and $p(B\cup\sigma(W_3)\cup\sigma(S))=p(B\cup\sigma(W_4)\cup\sigma(S'))=1$, for some $\sigma(S)=W_2$ and $\sigma(S')=W_5$, thus $dist(p,\sigma(p))\geq 4$.

\item $\sigma(W_5)=W_5$ and $\sigma(W_2)=W_2$.

Since $\sigma(C)\neq C$, W.L.O.G, assume $\sigma(W_1)\in D$, then $p(B\cup\sigma(W_1)\cup\sigma(W_5))=1$.

If $\sigma(W_6)\in D$, then $p(B\cup\sigma(W_2)\cup\sigma(W_6))=1$ and $p(B\cup\sigma(S)\cup\sigma(W_5))=p(B\cup\sigma(S')\cup\sigma(W_2))=0$, for some $\sigma(S)=W_1$ and $\sigma(S')=W_6$.

If $\sigma(W_6)=W_6$, then $p(B\cup\sigma(W_1)\cup\sigma(W_6))=1$, and $p(B\cup\sigma(S)\cup\sigma(W_5))=p(B\cup\sigma(S)\cup\sigma(W_6))=0$, for some $\sigma(S)=W_1$.

If $\sigma(W_6)=W_1$, then $p(B\cup\sigma(W_2)\cup\sigma(W_6))=1$ and $p(B\cup\sigma(W_6)\cup\sigma(W_5))=0$.

Thus we always have $dist(p,\sigma(p))\geq 3$.

\item $\sigma(W_5)\neq W_5$ and $\sigma(W_2)=W_2$.

Note that $p(B\cup\sigma(W_4)\cup\sigma(W_5))=0$ and $p(B\cup\sigma(W_4)\cup\sigma(S))=1$ for some $\sigma(S)=W_5$.

If $\sigma(W_5)\in D\cup\{W_1\}$ , then $p(B\cup\sigma(W_2)\cup\sigma(W_5))=1$.

If $\sigma(W_5)=W_6$ and $\sigma(W_6)\in D\cup\{W_1\}$, then $p(B\cup\sigma(W_2)\cup\sigma(W_6))=1$.

If $\sigma(W_5)=W_6$ and $\sigma(W_6)\in W_5$, since  $\sigma(C)\neq C$, thus $\sigma(W_1)\in D$ and $p(B\cup\sigma(W_1)\cup\sigma(W_5))=1$.

Therefore we always have $dist(p,\sigma(p))\geq 3$.
\item $\sigma(W_2)\neq W_2$ and $\sigma(W_5)=W_5$.

Similar to the above one.
\end{enumerate}
\item  $\sigma(W_3)=W_4$ and $\sigma(W_4)=W_3$.

Similar to the case where $\sigma(W_3)=W_3$ and $\sigma(W_4)=W_4$.
\end{enumerate}
\subparagraph{Case for $k_2\geq 2$}
Similarly, since $|\{\sigma(p)|\sigma(C)=C$ and $\sigma(B)=B\}|=O(1)$, we only need to consider the set $\{\sigma(p)| \sigma(C)\neq C$ and $\sigma(B)=B\}$,
 and we exclude each $\sigma$ case by case:
 \begin{enumerate}
 \item  $\sigma(W_3)$ or $\sigma(W_4)$ $\notin \{W_3,W_4\}$.

Assume $\sigma(W_3)\notin \{W_3,W_4\}$, note that for any $S\cap(B\cup W_3)=\emptyset$, $p(B\cup W_3\cup S)\neq \star$ and there are only two such $S$s to make $p=1$. While no matter what $\sigma(W_3)$ is, it's easy to see there are at least five (actually many) such $S$s to make $p(B\cup \sigma(W_3)\cup \sigma(S))=1$, thus $dist(p,\sigma(p))\geq 3$.
\item  $\sigma(W_3), \sigma(W_4)\in \{W_3, W_4\}$

W.L.O.G, assume $\sigma(W_3)=W_3$ and $\sigma(W_4)=W_4$.
\begin{enumerate}
\item $\sigma(W_5)\neq W_5$ and $\sigma(W_2)\neq W_2$.

Now $p(B\cup\sigma(W_3)\cup\sigma(W_2))=p(B\cup\sigma(W_4)\cup\sigma(W_5))=0$ and $p(B\cup\sigma(W_3)\cup\sigma(S))=p(B\cup\sigma(W_4)\cup\sigma(S'))=1$, for some $\sigma(S)=W_2$ and $\sigma(S')=W_5$. Therefore, $dist(p,\sigma(p))\geq 4$.
\item $\sigma(W_5)=W_5$ or $\sigma(W_2)=W_2$.

Assume $\sigma(W_5)=W_5$, since $\sigma(C)\cap D\neq \emptyset$, there exists some $W\in\{W_1,W_2,W_6\}$ such that $\sigma(W)\cap D\neq\emptyset$.

Moreover, for any $S\subseteq W_3\cup W_4\cup W_5$ and $S\notin\{W_3,W_4,W_5\}$ with $|S|=k_2$,  we have $\sigma(S)\subseteq W_3\cup W_4\cup W_5$ and $\sigma(S)\notin\{W_3,W_4,W_5\}$ , thus $p(B\cup W\cup S)=0\neq p(B\cup\sigma(W)\cup\sigma(S))=1$, and note that there are at least ${6 \choose 2}-3=12$ such $S$s. Thus,
$dist(p,\sigma(p))\geq 3$.
\end{enumerate}
\end{enumerate}
\end{proof}

\begin{proof} ({\bf Proof of Theorem \ref{upperbound:3k+1}}) We still use minterm functions here.

Let $k=3l+1$. Note that in the above construction for $(3l+1)$-uniform hypergraph properties, $s_1(f)\leq |p|=O(n^l)$ and $s_0(f)=O(n^{l+1})$. Intuitively, we can pack $\sqrt{n}$ minterms together to get a super minterm, expecting to decrease the number of isomorphisms satisfying the distance condition (i.e., where any of two isomorphisms of $p$ have distance at most 2). Unfortunately, just packing minterms naively doesn't work here, we need some tricks.

Let $p$ be the minterm defining $f$. $p$ is constructed as follow:

The notions $V$, $B$, $W_i$, $C$ and $D$ are defined the same as in Theorem \ref{upperbound:general}, where we let $k_1=k_2=l$. Besides that, let $D_1=\{v_{6l+1},v_{6l+2},\cdots,v_{6l+\sqrt{n}}\}$ and $D_2=D\setminus D_1$.
\begin{itemize}
\item For any $S\subseteq C$ of size $2l$ and any $v\in D_1$, $p(B\cup S\cup v)=0$, except $S=W_i\cup W_{i+1}$ for $i\in[5]$ where $p(B\cup S\cup v)=1$.
\item For any $S\subseteq C$ of size $2l$ and any $v\in D_2$, $p(B\cup S\cup v)=1$.
\item For any $S$ satisfying $l\leq|S\cap C|<2l$, $|S|=2l+1$ and $S\cap D_1\neq\emptyset$, $p(B\cup S)=1$, except $W_3$ or $W_4\subseteq S$ where $p(B\cup S)=0$.
\item  All the other variables are  $\star$.
\end{itemize}
It is not hard to see that $|p|=O(n^{l+1/2})$, thus $s_1(f)\leq C_1(f)\leq|p|=O(n^{l+1/2})$.


Similar to the argument in the proof of Theorem~\ref{upperbound:general},  we just need to show the following claim to complete the proof.
 \begin{Claim}
There are only $O(\sqrt{n})$ $\sigma(p)$s  with the same $\pi(B)=\sigma(B)$ satisfying $dist(\pi(p),\sigma(p))\leq 2$.
\end{Claim}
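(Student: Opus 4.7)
The plan is to adapt the case analysis of Theorem~\ref{upperbound:general}'s claim (specifically the ``Case for $k_2 \geq 2$'' portion) to the refined minterm $p$ with the new partition $D = D_1 \cup D_2$. By the transitivity of the $S_n$-action on minterms I may assume $\pi = \mathrm{id}$ and bound
\[
N := |\{\sigma(p) : \sigma(B) = B \text{ and } dist(p, \sigma(p)) \leq 2\}|.
\]
I split the count on whether $\sigma$ preserves $C$ setwise.

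In the branch $\sigma(C) = C$, the structural analysis of Theorem~\ref{upperbound:general} still applies and forces $\sigma|_C$ to be one of $O(1)$ automorphisms of the $W_i$-configuration (essentially the identity and the reflection $W_i \leftrightarrow W_{7-i}$). I then observe that $\sigma(D_1) = D_1$ must also hold setwise in this regime: if any vertex $v \in D_1$ were mapped into $D_2$, then for each of the $\binom{6l}{2l} - 5$ subsets $S \subseteq C$ with $|S| = 2l$ and $S$ not of the form $W_i \cup W_{i+1}$, the pair of edges $B \cup S \cup v$ (case~1, value $0$) and $B \cup S \cup \sigma(v)$ (case~2, value $1$) would both contribute mismatches under $\sigma$, producing many more than $2$ mismatches in total. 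Hence this branch contributes only $O(1)$ distinct values of $\sigma(p)$.

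In the branch $\sigma(C) \neq C$, the case~3 edges (of size $2l + 1$ with one $C$-vertex and at least one $D_1$-vertex) come decisively into play. I will argue that $\sigma(C)$ may differ from $C$ by at most a single vertex, and that the substituted vertex must lie in $D_1$ rather than $D_2$: any $u \in \sigma(C) \cap D_2$, paired with $D_1$ vertices via case~3 edges, would yield $\Omega(\sqrt{n})$ mismatches; and two or more simultaneous substitutions would likewise multiply the mismatch count past the distance budget. The number of single substitutions $C \setminus \{c\} \cup \{d\}$ with $c \in C$ and $d \in D_1$ is $6l \cdot \sqrt{n} = O(\sqrt{n})$, and within each such substitution only $O(1)$ residual freedom remains by the same structural analysis as in Theorem~\ref{upperbound:general}.

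Summing the two branches gives $N = O(\sqrt{n})$, as required. The principal obstacle will be the second branch: the mismatch-bookkeeping across cases~1--3 must be done carefully to exclude every configuration involving $D_2$-substitutions or multiple $C$-displacements while retaining the single $D_1$-substitution option that provides the $O(\sqrt{n})$ freedom, paralleling the sub-case enumeration in the claim of Theorem~\ref{upperbound:general} but tracking the $D_1/D_2$ distinction throughout.
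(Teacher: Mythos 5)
Your proposal takes a genuinely different route from the paper. The paper argues by contradiction plus pigeonhole: assuming there are $\geq C\sqrt{n}$ close $\sigma(p)$'s, it finds (by counting $(\text{vertex},\sigma)$ pairs with $\sigma(v)\in D_1$) a single vertex $v$ such that $\sigma(v)\in D_1$ for $\geq C$ of them; normalizing so that one of these is the identity forces $v\in D_1$, and then the slices $p(B\cup S\cup v)=p'(B\cup S)$ and $\sigma(p)(B\cup S\cup v)=p'(B\cup\sigma(S))$ reproduce the $3l$-uniform minterm $p'$ of Theorem~\ref{upperbound:general}, so that claim's $O(1)$ bound applies directly. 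The pigeonhole deliberately sacrifices a $\sqrt{n}$ factor in exchange for never having to characterize exactly which $\sigma(p)$'s lie within distance $2$. You instead try to produce that full characterization, and that is where the proposal breaks down.

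Specifically, your $\sigma(C)\neq C$ branch is wrong: a single substitution $\sigma(C)=C\setminus\{c\}\cup\{d\}$ with $d\in D_1$ is never within distance $2$ of $p$; in fact its distance is $\Omega(\sqrt{n})$. To see this, suppose $\sigma(c')=d$ with $c'\in W_j$. Pick any $j'$ with $\{j,j'\}$ non-consecutive and $\sigma(W_{j'})\notin\{W_3,W_4\}$ (at least two such $j'$ always exist). For every $v\in D_1$ with $\sigma(v)\notin C$ (all but at most one), consider $e=B\cup W_j\cup W_{j'}\cup\{v\}$: by bullet $1$ of the construction $p(e)=0$, yet $\sigma(e)\setminus B$ has $2l+1$ vertices, exactly $2l-1$ of them in $C$, with $d\in D_1$ among the rest, so bullet $3$ gives $p(\sigma(e))=1$. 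That is $\sqrt{n}-O(1)$ mismatches from this single family of edges. So the $6l\cdot\sqrt{n}$ permutations you propose to count in this branch in fact contribute nothing; the ``$O(1)$ residual freedom'' assertion cannot be salvaged because there is nothing there to begin with. Your final figure $O(\sqrt{n})$ happens to be numerically consistent with the claim only because it silently absorbs this error; the decomposition you rely on does not hold. To make a direct structural proof work you would have to rule out $\sigma(C)\neq C$ entirely (and also handle $\sigma(D_1)\neq D_1$ when $\sigma(C)=C$, which the paper does explicitly), ending up with an $O(1)$ count — at which point the argument would no longer resemble the one you sketched.
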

\noindent\emph{Proof.}
By contradiction, suppose there are $C\sqrt{n}$ such $\sigma(p)$s where $C$ is a sufficient large number, thus there must exist a vertice $v$ such that $\sigma(v)\in D_1$  for at least $C$ such $\sigma(p)$s, w.l.o.g, assume this set contains $p$. And we will argue that there are only $O(1)$ such $\sigma(p)$s satisfying $dist(\sigma(p),p)\leq 2$, then it's a contradiction, which completes the proof.

Since the vertices in $D_1$ or $D_2$ are symmetric, thus $|\{\sigma(p)|\sigma(C)=C$ and $\sigma(D_1)=D_1\}|=O(1)$.

If $\sigma(C)=C$ and $\exists v_1\in D_1,v_2\in D_2$ satisfying $\sigma(v_1)=v_2$, then $dist(\sigma(p),p)\geq3$, since almost all variables which contains $v_1$, $C$ and $B$ are 0 in $p$, while all these variable are 1 in $\sigma(p)$.

If $\sigma(C)\neq C$, since $\sigma(v)\in D_1$, then we find that $p(S\cup v)=p'(S)$ where $p'$ is the minterm defined in Theorem~\ref{upperbound:general} for $3l$-uniform hypergraph properties. Similarly, $\sigma(p)(S\cup v)=p(\sigma(S)\cup \sigma(v))=p'(\sigma(S))$. We only consider those $S$s satisfying $v\notin \sigma(S)\cup S$ and follows the similar proof of Claim 1 in Theorem \ref{upperbound:general}. Finally we can obtain $dist(p,\sigma(p))\geq3$.

\end{proof}

\section{$k$-Partite $k$-Uniform Hypergraph Properties and Abelian Groups}\label{section:general}
In this section, we give the constructions of $k$-partite $k$-uniform hypergraph properties first.
\begin{proof} ({\bf Proof of Theorem \ref{thm:partite1}})

\begin{table}[!hbp]
\begin{center}
\begin{tabular}{|c|c|c|c|c|c|c|c|c|}
  \hline
  $\vec{b}=$ & $\vec{1}$ & $\vec{2}$ & $\vec{3}$ & $\vec{4}$ & $\vec{5}$ & $\vec{6}$ & $\vec{7}$ & $\cdots$ \\ \hline
  $\vec{a}=\vec{1}$ & 0 & 0 & 0 & 1 & 1 & 1 & 1 & 1 \\
  $\vec{2}$ & 0 & 0 & 1 & 1 & 1 & 1 & 1 & 1 \\
  $\vec{3}$ & 1 & 0 & 1 & 0 & 0 & 0 & 0 & 0 \\
  $\vec{4}$ & 1 & 1 & 0 & $\star$ & $\star$ & $\star$ & $\star$ & $\star$ \\
  $\vec{5}$ & 1 & 1 & 0 & $\star$ & $\star$ & $\star$ & $\star$ & $\star$ \\
  $\vec{6}$ & 1 & 1 & 0 & $\star$ & $\star$ & $\star$ & $\star$ & $\star$ \\
  $\vec{7}$ & 1 & 1 & 0 & $\star$ & $\star$ & $\star$ & $\star$ & $\star$ \\
  $\cdots$ & 1 & 1 & 0 & $\star$ & $\star$ & $\star$ & $\star$ & $\star$ \\
  \hline
\end{tabular}
\caption{The tabel to illustrate $p$ of $k$-partite $k$-uniform hypergraph properties.}\label{table}
\end{center}
\end{table}

%
The function we use here is  also a minterm function. Let $k_1$ and $k_2$ be the integers such that $k_1+2k_2=k$ and $k_1,k_2\leq \lceil k/3\rceil$. We divide the $k$ partitions into three sets, and each of them is of size $k_2$, $k_2$ and $k_1$ and indicated by $\vec{a},\vec{b}\in[n]^{k_2}$ and $\vec{c}\in[n]^{k_1}$  respectively.
We use Table~\ref{table} to illustrate the minterm $p$:
\begin{itemize}
\item For $\vec{b}=\vec{1},\vec{2}$ or $\vec{3}$, $p{(\vec{1},\vec{b},\vec{1}_c)}=0$, otherwise $p{(\vec{1},\vec{b},\vec{1}_c)}=1$.
\item For $\vec{b}=\vec{1}$ or $\vec{2}$, $p{(\vec{2},\vec{b},\vec{1}_c)}=0$, otherwise $p{(\vec{2},\vec{b},\vec{1}_c)}=1$.
\item For $\vec{b}=\vec{1}$ or $\vec{3}$, $p{(\vec{3},\vec{b},\vec{1}_c)}=1$, otherwise $p{(\vec{3},\vec{b},\vec{1}_c)}=0$.
\item For $\vec{a}\notin\{\vec{1},\vec{2},\vec{3}\}$ and $\vec{b}=\vec{1}$ or $\vec{2}$, $p{(\vec{a},\vec{b},\vec{1}_c)}=1$.
\item For $\vec{a}\notin\{\vec{1},\vec{2},\vec{3}\}$ and $\vec{b}=\vec{3}$, $p{(\vec{a},\vec{b},\vec{1}_c)}=0$.
\item Otherwise $p{(\vec{a},\vec{b},\vec{c})}=\star$.
\end{itemize}
Here $\vec{1}$, $\vec{2}$ and $\vec{3}$ can be any three different vectors, W.L.O.G, assume $\vec{1}=(1,\cdots,1,1)$, $\vec{2}=(1,\cdots,1,2)$, $\vec{3}=(1,\cdots,1,3)$ and $\vec{1}_c=(1,\cdots,1,1)$.
It's easy to see $s_1(f)\leq C_1(f)\leq|p|=O(n^{k_2})$.
By discussing case by case, it can be verified that for any $p^{\pi}$ there are at most $O(1)$ $p^{\sigma}$s satisfying $\pi(\vec{c})=\sigma(\vec{c})$ and $dist(p^{\pi},p^{\sigma})\leq2$. Thus $s_0(f)=O(n^{k_1})$ since there are at most $n^{k_1}$ choices of $\vec{c}$. The verify procedure is trivial but tedious, and we omit it here.
\end{proof}

In the following, we give the proofs of Theorem~\ref{thm:partite2} and Theorem~\ref{thm:abelian}.
\begin{proof} ({\bf Proof of Theorem~\ref{thm:partite2}})
We still use minterm functions here. Let $k=3l+1$ where $l\geq1$. We divide the $k$ partitions into four sets of size $l$, $l$, $l$ and 1, and each set is indicated by $\vec{a},\vec{b},\vec{c}\in[n]^l$ and $\vec{d}\in[n]$ respectively.
The minterm $p$ is constructed as follow:
\begin{itemize}
\item For any $\vec{d}\in[\sqrt{n}]$, and any $\vec{a}$ and $\vec{b}$, $p(\vec{a},\vec{b},\vec{1},\vec{d})=p'(\vec{a},\vec{b},\vec{1}_c)$. Here $p'$ is the partial assignment defined in the proof of Theorem~\ref{thm:partite1}.
\item For any $\vec{d}\notin[\sqrt{n}]$ and $\vec{a},\vec{b}\in\{\vec{1},\vec{2},\vec{3}\}$, $p(\vec{a},\vec{b},\vec{1},\vec{d})=1$.
\item Otherwise $p(\vec{a},\vec{b},\vec{c},\vec{d})=\star$.
\end{itemize}
It's easy to see $s_1(f)\leq|p|=O(n^{l+1/2})$.
It is also not hard to verify that there are at most $\sqrt{n}$ $p^{\sigma}$s with the same $\sigma({\vec{c}})$ and satisfying the condition that the distance between any two of them is at most 2, thus $s_0(f)=O(n^{l+1/2})$.
\end{proof}
\begin{proof} ({\bf Proof of Theorem~\ref{thm:abelian}})
First note that the transitive action of a group $G$ on $[n]$ is equivalent to the action of $G$ by left multiplication on a coset space $G/$Stab$_1$, here Stab$_1$ is the stabilizer of  the element $1\in[n]$. Since $G$ is an Abelian group, Stab$_1$=$\cdots$=Stab$_n$, thus Stab$_1$=$\{e\}$. Therefore, the action of $G$ on $[n]$ is equivalent to the action of $G$ by multiplication on itself. So we can relabel the variables $(x_1,\cdots,x_n)$ as $(x_{(1,\cdots,1)},\cdots,x_{(m_1,\cdots,m_l)})$ to make $(\sigma_1\otimes\cdots\otimes\sigma_l)(x)=(x_{(\sigma_1(1), \cdots,\sigma_l(1))},\cdots,x_{(\sigma(m_1),\cdots,\sigma_l(m_2))})$ for any $\sigma_1\otimes\cdots\otimes\sigma_l\in C_{m_1}\times\cdots\times C_{m_l}$.

Let $p_m$ be the minterm of $f:\{0,1\}^m\rightarrow\{0,1\}$ defined by Chakraborty in Theorem 3.1 in~\cite{Chakraborty}.
We define the minterm $p$ as $p(i_1,\cdots,i_l)=\bigoplus_{j=1}^{l} p_{m_j}(i_j)$. Here $\star\oplus b=\star$, for $b=0,1$, or $\star$.
It is easy to see $s_1(f)\leq|p|=\prod_{j=1}^{l}|p_{m_j}|\leq\gamma n^{1/3}$, where $\gamma$ is a number only depending on $l$.
Moreover, according to the construction of $p_m$, it is easy to see that there are at most $\beta n^{1/3}$ $\sigma(p)$s  where the distance between any two of them is at most 2. Here $\beta$ is another number only depending on $l$, thus $s_0(f)\leq \beta n^{1/3}$. This completes the proof.
\end{proof}

\section{Lower bounds}\label{section:proof_2}
In this section, we give the proof of Theorem~\ref{lowerbound}. The proof is similar with Lemma 8 in \cite{Sun}.

\begin{proof} ({\bf Proof of Theorem~\ref{lowerbound}}) W.O.L.G we assume that for the empty graph $\overline{K_n}$, $f(\overline K_n)=0$. Since $f$ is non-trivial, there must exist a graph $G$ such that $f(G)=1$. Let's consider graphs in $f^{-1}(1)=\{G|f(G)=1\}$ with the minimum number of edges. Define $m=\min\{|E(G)|:f(G)=1\}$.

We claim that if $m\geq\frac{1}{k+2}n$, then $s(f)\geq\frac{1}{k+2}n$. Let $G$ be a graph in $f^{-1}(1)$ and $|E(G)|=m$. Consider the subfunction $f'$ where $\forall e\notin E(G)$, $x_e$ is restricted to 0, since G has the the minimum number of edges, deleting any edges from $G$ will change the values of $f(G)$, therefore, $f'$ is a AND function. Thus, $s(f)\geq s(f')=m\geq\frac{1}{k+2}n$.

In the following we assume $m<\frac{1}{k+2}n$. Again let $G$ be a graph in $f^{-1}(1)$ with $|E(G)|=m$. Let us consider the isolated vertices set $I$, as
$$\sum_{v\in V}deg(v)=k|E(G)|<\frac{k}{k+2}n.$$
We have
$$|I|\geq n-\sum_{v\in V}deg(v)>\frac{2}{k+2}n.$$
Suppose $s(f)<\frac{1}{k+2}n$, we will deduce that there exists another graph  with fewer edges and the same value, against the assumption that $G$ has the minimum number of edges in $f^{-1}(1)$, which ends the whole proof.

 Pick a vertex $u$ with $deg(u)=d>0$. Suppose in the graph $G$ vertex $u$ is adjacent to $(k-1)$-edges $\{e^{(k-1)}_1,e^{(k-1)}_2,\cdots,e^{(k-1)}_d\}$ and  $I=\{u_1,u_2,\cdots,u_t\}$, where $t=|I|$.

Consider the $t$-variable Boolean function $g_1$: $\{0,1\}^t\rightarrow \{0,1\}$, where
$$g_1(x_1,\cdots,x_t)=f(G+x_1(e^{(k-1)}_1,u_1)+\cdots+x_t(e_1^{(k-1)},u_t)).$$
It is easy to see that $g_1$ is a symmetric function. We claim that  $g_1$ is a constant function: if not, we have $s(g_1)\geq\frac{1}{2}t$  \cite{turan}, which implies $s(f)>\frac{1}{k+2}n$ since $g_1$ is a restriction of $f$. In particular, $g_1(1,\cdots,1)=g_1(0,\cdots,0)$, i.e. $f(G_1)=f(G)$, where $G_1=G+\sum_{i=1}^t(e^{(k-1)}_1,u_i)$.

Define $G_i=G_{i-1}+\sum_{j=1}^t(e^{(k-1)}_i,u_j)$ $(i=2,\cdots,d)$. Similarly, we can show that
$$f(G)=f(G_1)=\cdots=f(G_d).$$
Next we will delete all the edges between $\{u,u_1,\cdots,u_t\}$ and $\{e^{(k-1)}_1,e^{(k-1)}_2,\cdots,e^{(k-1)}_d\}$ from $G_d$ by reversing the adding edge procedure of $G\rightarrow G_1\rightarrow\cdots \rightarrow G_d$. More precisely, define $H_1=G_d$; for $i=2,\cdots,d$, define
$$H_i=H_{i-1}-(e^{(k-1)}_i,u)-(e^{(k-1)}_i,u_1)-\cdots-(e^{(k-1)}_i,u_t),$$
and
$$h_i(y_0,y_1,\cdots,y_t)=f(H_i+y_0(e^{(k-1)}_i,u)+y_1(e^{(k-1)}_i,u_1)+\cdots+y_t(e^{(k-1)}_i,u_t)).$$
Similarly, by the fact $s(f)<\frac{1}{k+2}n$ we can show that all the functions $h_2,\cdots,h_d$ are constant, which implies $f(H_1)=f(H_2)=\cdots=f(H_d)$. So we find another graph $H_d$ with fewer edges than $G$ and $f(H_d)=1$.
\end{proof}

\section{Conclusion}\label{section:conclusion}
In this paper, we present a $k$-uniform hypergraph property with sensitivity complexity $O(n^{\lceil k/3\rceil})$ for any $k\geq 3$ and we can do better when $k\equiv1$ (mod 3). Besides that, we also investigate the sensitivity complexity of other transitive Boolean functions with certain symmetry.  All the functions we constructed in this paper are minterm transitive functions. On the other side,
Charkrobati~\cite{Chakraborty} proved that the sensitivity complexity of any minterm transitive Boolean function $f:\{0,1\}^n\rightarrow \{0,1\}$ is at least $\Omega(n^{1/3})$.
Kulkarni et al.~\cite{fractional} point out that the existence of any transitive function $f:\{0,1\}^n\rightarrow \{0,1\}$ with $s(f)=n^\alpha$ where $\alpha<1/3$ implies a  larger than quadratic separation between block sensitivity and sensitivity.  We conjecture that the example here is almost tight.
\begin{Conjecture}\label{conj2}
For any constant $k\geq 3$ and for any non-trivial $k$-hypergraph property $f$,  $s(f)=\Omega(n^{k/3})$, where $n$ is the number of vertices.
\end{Conjecture}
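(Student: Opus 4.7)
The natural plan is to strengthen Theorem~\ref{lowerbound} by iterating its edge-deletion/isolated-vertex dichotomy to higher order. Fix a minimum-hyperedge witness $G\in f^{-1}(1)$ with $m=|E(G)|$. In the large-edge regime $m\geq cn^{k/3}$, restricting every non-hyperedge of $G$ to $0$ yields an AND of $m$ literals (by minimality every hyperedge of $G$ matters), so $s(f)\geq m\geq cn^{k/3}$ directly. One may therefore assume $m<cn^{k/3}$, in which case the isolated vertex set $I$ has size $|I|\geq n-km=\Theta(n)$.

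For $k=2$, Turan's trick relabels one endpoint of an edge $e=(u,v)$ across $I$ to build a fully symmetric sub-function on $|I|$ variables; non-constancy forces $s(f)=\Omega(|I|)$, while constancy permits deletion of $e$ without changing $f$, contradicting minimality. The generalization I would attempt is to simultaneously reassign $j$ of the $k$ vertices of each hyperedge across $I$, aiming for $j=\lceil k/3\rceil$, producing an auxiliary sub-function indexed by $j$-subsets of $I$. If that sub-function is non-constant, then since it is itself a $j$-uniform hypergraph property on $|I|=\Theta(n)$ vertices, Theorem~\ref{lowerbound} gives $s(f)\geq\Omega(n)$, which is already what we have. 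To reach $\Omega(n^{k/3})$ one would need to iterate the dichotomy recursively, building a tower of substitutions so that each round multiplies the exponent, and terminating with an AND-reduction of the correct size at the bottom level.

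The main obstacle, and in my view the reason the conjecture is genuinely hard, is precisely that this tower collapses: every round produces a $j'$-uniform hypergraph property on a smaller vertex set for which Theorem~\ref{lowerbound} yields only $\Omega(n)$ rather than the polynomial amplification one needs. Since $N=\binom{n}{k}\approx n^k$, the conjectured $\Omega(n^{k/3})$ precisely matches the block-sensitivity bound $bs(f)=\Omega(N^{1/3})$ for transitive functions from~\cite{Sun07}, so the conjecture is really asserting that sensitivity polynomially matches block sensitivity for hypergraph properties --- a special case of the sensitivity conjecture itself. Two avenues that might break this circularity are (i) a direct structural analysis showing that a block-sensitive input to a hypergraph property must admit many disjoint single-bit sensitive flips, perhaps via random-restriction or polynomial-method tools surveyed in~\cite{HKP11}; and (ii) an extension of the topological/algebraic evasiveness arguments of~\cite{KQS15,Black2015} to extract sensitivity rather than decision-tree-complexity lower bounds. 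Given that any super-linear sensitivity lower bound for general transitive Boolean functions would itself be a significant breakthrough~\cite{fractional}, I expect substantial new ideas beyond the symmetrisation tricks of Theorem~\ref{lowerbound} will be required.
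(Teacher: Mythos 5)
This statement is labeled a \emph{Conjecture} in the paper, and the authors offer no proof of it; it is posed as an open problem in the concluding section. Your write-up correctly recognizes this: you do not claim a proof, and your diagnosis of why the naive strengthening of Theorem~\ref{lowerbound} fails is sound. In particular, you are right that iterating the isolated-vertex/minimum-edge dichotomy only ever returns an $\Omega(n)$ bound at each level, because the auxiliary sub-function one obtains is again a hypergraph property on $\Theta(n)$ vertices for which the same linear bound is the best the argument yields; there is no mechanism in that scheme for the exponents to accumulate. You are also right about the broader context: $\Omega(n^{k/3}) = \Omega(N^{1/3})$ matches the known block-sensitivity lower bound for transitive functions from~\cite{Sun07}, so proving the conjecture would establish a polynomial sensitivity--block-sensitivity relationship in this special case, and the paper itself observes (citing~\cite{fractional}) that a transitive function with $s(f)=n^{\alpha}$, $\alpha<1/3$, would give a super-quadratic separation. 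Your assessment that genuinely new ideas, beyond the symmetrization used for Theorem~\ref{lowerbound}, would be required is consistent with the authors' framing of this as an open conjecture rather than a result.
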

\begin{Conjecture}\label{conj1}
For any $k\geq 3$, there exist a sequence of $k$-uniform hypergraph properties $f$ with $s(f)=O(n^{ k/3 })$, where $n$ is the number of vertices.
\end{Conjecture}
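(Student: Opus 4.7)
The conjecture already holds for $k \equiv 0 \pmod 3$ via Theorem~\ref{upperbound:general}, so the plan is to handle the two remaining residue classes, namely $k = 3l+1$ (where Theorem~\ref{upperbound:3k+1} is off by a factor $n^{1/6}$) and $k = 3l+2$ (where Theorem~\ref{upperbound:general} is off by $n^{1/3}$). The approach is to extend the minterm constructions of Section~\ref{section:proof_1} with a finer packing device that balances $s_0$ and $s_1$ exactly at $n^{k/3}$.

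Recall that the base construction with parameters $k_1 + 2k_2 = k$ yields $s_1 = O(n^{k_2})$ and $s_0 = O(n^{k_1})$, while the packing device of Theorem~\ref{upperbound:3k+1} introduces an auxiliary vertex class $D_1$ of size $n^\alpha$ that multiplies $|p|$ (hence $s_1$) by $n^\alpha$ but, through a rigidity argument on isomorphisms, is supposed to divide $s_0$ by $n^\alpha$. The plan is therefore: for $k = 3l+1$, take the $(k_1,k_2) = (l+1, l)$ base minterm with an $n^{1/3}$-sized $D_1$, targeting $s_1 = s_0 = O(n^{l+1/3})$; for $k = 3l+2$, take the $(k_1,k_2) = (l+2, l)$ base minterm with an $n^{2/3}$-sized $D_1$, targeting $s_1 = s_0 = O(n^{l+2/3})$. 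In each case, $s_1 \leq |p|$ is an immediate count, and the matching $s_0$ bound reduces to showing that for any reference isomorphism $\pi(p)$, there are only $O(|D_1|)$ isomorphisms $\sigma(p)$ with $\pi(B) = \sigma(B)$ and $\mathrm{dist}(\pi(p),\sigma(p)) \leq 2$.

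The main obstacle is precisely that isomorphism-counting step, a strengthening of Claim~2. With $|D_1| = \sqrt n$ the existing case analysis leaves roughly $\sqrt n$ freedom for $\sigma$ to act inside $D_2$; when $|D_1|$ is shrunk that freedom grows, and a naive adaptation of the case analysis no longer gives the required $O(|D_1|)$ bound. The fix I anticipate is to embellish the auxiliary class with additional structure: either decorate $D_1$ with its own internal sub-minterm so that every nontrivial permutation inside $D_1 \cup D_2$ forces at least three distance-$1$ discrepancies, or stratify $D \setminus C$ into several layers $D_1, D_2, \ldots$ with carefully chosen sizes $n^{\alpha_i}$ and cross-layer incidence constraints so that mixed permutations are similarly penalized. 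Proving the sharper rigidity lemma under such an enriched structure, while simultaneously verifying that $|p|$ still stays at $O(n^{k/3})$, is the principal new combinatorial input the conjecture appears to demand and where I expect the bulk of the work to lie.
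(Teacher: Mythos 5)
The statement you are trying to prove is Conjecture~\ref{conj1}, which the paper explicitly leaves \emph{open}; there is no ``paper's own proof'' to compare against. The paper's actual results stop at $O(n^{\lceil k/3\rceil})$ in general (Theorem~\ref{upperbound:general}) and $O(n^{\lceil k/3\rceil-1/2})$ when $k\equiv1\pmod 3$ (Theorem~\ref{upperbound:3k+1}), and the authors list this conjecture precisely because they could not close the remaining gap of $n^{1/6}$ (for $k\equiv1$) and $n^{1/3}$ (for $k\equiv2$). Your write-up is itself not a proof: you say explicitly that ``proving the sharper rigidity lemma\dots is the principal new combinatorial input\dots where I expect the bulk of the work to lie,'' which is exactly the step the conjecture is open on. So as a proof, this does not go through; as a research plan, it correctly identifies where the difficulty sits.

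Beyond the acknowledged gap, the quantitative targets you set out don't match the mechanism you describe. You state that the $D_1$ device ``multiplies $|p|$ (hence $s_1$) by $n^\alpha$'' and ``divides $s_0$ by $n^\alpha$.'' Applied to your proposed base $(k_1,k_2)=(l+1,l)$ for $k=3l+1$, which has $s_1=O(n^{l})$ and $s_0=O(n^{l+1})$, taking $\alpha=1/3$ gives $s_1=O(n^{l+1/3})$ but $s_0=O(n^{l+2/3})$, so $s(f)=O(n^{l+2/3})$---\emph{worse} than both Theorem~\ref{upperbound:3k+1}'s $O(n^{l+1/2})$ and the target $O(n^{l+1/3})$. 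Balancing that mechanism forces $\alpha=1/2$, which is exactly what Theorem~\ref{upperbound:3k+1} already does; tuning $|D_1|$ alone cannot push below $n^{l+1/2}$. There is also a structural mismatch: the $D_1/D_2$ split in Theorem~\ref{upperbound:3k+1} lives on the extra singleton vertex, which requires $k_1+2k_2<k$ (the paper uses $k_1=k_2=l$ with a separate vertex $v$), whereas your $(k_1,k_2)=(l+1,l)$ already exhausts all $k$ coordinates and leaves no slot for $v$. So even the skeleton of the plan needs to be reworked before the (still-missing) rigidity lemma could bite.
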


A more general question is the following variant of Turan's question proposed by Chakraborty~\cite{Chakraborty}:
If $f$ is Boolean function invariant under a transitive group of permutations then is it true that $s(f)= \Omega(n^c)$ for some constant $c>0$? We conjecture that the inequality holds for $c=1/3$, which would imply Conjecture~\ref{conj2} and the sensitivity conjecture of transitive functions.

\bibliographystyle{unsrt}
\bibliography{reference}
\end{document}